%%%%%%%%%%%%%%%%%%%%%%% file template.tex %%%%%%%%%%%%%%%%%%%%%%%%%
%
% This is a general template file for the LaTeX package SVJour3
% for Springer journals.          Springer Heidelberg 2010/09/16
%
% Copy it to a new file with a new name and use it as the basis
% for your article. Delete % signs as needed.
%
% This template includes a few options for different layouts and
% content for various journals. Please consult a previous issue of
% your journal as needed.
%
%%%%%%%%%%%%%%%%%%%%%%%%%%%%%%%%%%%%%%%%%%%%%%%%%%%%%%%%%%%%%%%%%%%
%
% First comes an example EPS file -- just ignore it and
% proceed on the \documentclass line
% your LaTeX will extract the file if required
%\begin{filecontents*}{example.eps}
%%!PS-Adobe-3.0 EPSF-3.0
%%%BoundingBox: 19 19 221 221
%%%CreationDate: Mon Sep 29 1997
%%%Creator: programmed by hand (JK)
%%%EndComments
%gsave
%newpath
%  20 20 moveto
%  20 220 lineto
%  220 220 lineto
%  220 20 lineto
%closepath
%2 setlinewidth
%gsave
%  .4 setgray fill
%grestore
%stroke
%grestore
%\end{filecontents*}
%
\RequirePackage{fix-cm}
\documentclass{svjour3}                     % onecolumn (standard format)
\smartqed  % flush right qed marks, e.g. at end of proof
\usepackage{graphicx}
%
% \usepackage{mathptmx}      % use Times fonts if available on your TeX system
%
% insert here the call for the packages your document requires
%\usepackage{latexsym}
% etc.
%
% please place your own definitions here and don't use \def but
% \newcommand{}{}
%
% Insert the name of "your journal" with
% \journalname{myjournal}
%

\usepackage[cmex10]{amsmath}
\usepackage{amssymb}
\usepackage{amsmath}
\usepackage{mwe}
\usepackage{caption}
\usepackage{url}

\newcommand{\xbox}{\qed}

\newcommand{\after}{\text{-\ul{after}-}}

\newcommand{\fun}{\rightarrow}

\newcommand{\pass}{\ul{\text{\bf pass}}}

\newcommand{\ii}[1]{\ul{#1}}

\newcommand{\TS}{\text{TS}}

\newcommand{\ul}{\underline}
\newcommand{\ol}{\overline}

%\newtheorem{definition}{Definition}
%\newtheorem{property}{Property}
%\newtheorem{lemma}{Lemma}
%\newtheorem{theorem}{Theorem}
%\newtheorem{example}{Example}

% =========================================================================================
\begin{document}

\title{Complete Requirements-based Testing with Finite State Machines%
\thanks{Funded by the Deutsche Forschungsgemeinschaft (DFG, German Research Foundation) -- project number  407708394.}
}
%\subtitle{Do you have a subtitle?\\ If so, write it here}

\titlerunning{Complete Requirements-based Testing}        % if too long for running head

\author{Wen-ling Huang         \and
        Jan Peleska
}
\offprints{Jan Peleska}

%\authorrunning{Short form of author list} % if too long for running head

\institute{Wen-ling Huang \at
              University of Bremen \\
              Department of Mathematics and Computer Science\\
              \email{huang@uni-bremen.de}           %  \\
           \and
           Jan Peleska \at
              (Corresponding author) \\
              University of Bremen \\
              Department of Mathematics and Computer Science\\
              \email{peleska@uni-bremen.de}      \\
}

%\date{Received: date / Accepted: date}
% The correct dates will be entered by the editor

\maketitle

\begin{abstract}
In this paper, new contributions to requirements-based testing with deterministic finite state
machines are presented. Elementary requirements are specified as triples consisting of
a state in the reference model, an input, and  the expected reaction of the system under test   defined by a set of admissible outputs, allowing for different implementation variants. 
Composite requirements are specified as collections of elementary ones. 
Two   requirements-driven test generation strategies   
are
introduced, and their fault coverage guarantees are  proven. The first is exhaustive in the sense that it 
produces test suites guaranteeing requirements
satisfaction if the test suite is passed. If the test suite execution fails for a given implementation, however, 
this does not imply that the requirement has been violated. Instead, the failure may indicate an arbitrary violation
of I/O-equivalence, which could be unrelated to the requirement under test. The second strategy is complete in the sense that it produces test suites 
guaranteeing requirements satisfaction {\it if and only if} the suite is passed. Complexity considerations indicate that for practical application, 
the first strategy should be preferred to the second. 
Typical application scenarios for this approach are safety-critical systems, where safety requirements 
should be tested with maximal thoroughness, while user requirements might be checked with lesser effort, using conventional testing heuristics.
\keywords{Property-based testing \and Requirements-driven testing\and Model-based testing \and Guaranteed fault coverage \and Finite state machines}
\end{abstract}

% ===============================================================================
\section{Introduction}
\label{sec:intro}

% -----------------------------------------------------------------------------------------
\subsection{Background: Requirements-driven Model-based Testing}

In model-based black-box testing of embedded control systems, test suites can be generated from models with two alternative objectives in mind. 
(1) The test suite could aim at uncovering conformance violations; typical conformance relations are 
interface language equivalence or refinement. It is assumed that the model captures the complete expected behaviour
of the system under test (SUT).
(2) Alternatively, test suites can be constructed to uncover violations of specific requirements   in an implementation behaviour. To this end, the requirement has to be specified in addition to the model, using, for example, temporal logic or test scenario specifications. Another option is to use modelling languages like SysML~\cite{SysML17} allowing to relate requirements to behavioural or structural model elements, making it often unnecessary to add temporal logic or scenario specifications. Some requirements-driven test approaches advocate
test generation from temporal logic formulas alone~\cite{DBLP:conf/issta/WhalenRHM06}, so that a behavioural model becomes unnecessary. This, however, has the disadvantage that formulas can only refer to interface variables, making the formulas quite complex to specify. With a behavioural model at hand, formulas can also refer to internal state variables, facilitating the expression of requirements~\cite{DBLP:conf/isola/0001BH18}.

Objective (1)  typically applies to protocol testing or any other domain where the model is sufficiently small to induce conformance test suites of acceptable size~\cite{protocoltestsystems95}. Objective (2) applies to   domains where models are too large and too complex to perform conformance tests with acceptable size and sufficient test strength to uncover conformance violations. Moreover, development standards for safety-critical control systems typically require that testing should be requirements-driven, so that a conformance testing approach that does not relate test cases to requirements would be inadmissible~\cite{DO178C,CENELEC50128}. Finally, requirements-driven testing is the preferred approach to systematic regression testing: after changes to the implementation, the requirements affected by these changes are identified, and it is tested whether the modified SUT conforms to these requirements. This allows to avoid re-testing the whole test suite, which may be too time consuming, 
especially in HW/SW integration testing or system testing, where test executions need to be 
performed in physical time and cannot be sped up by using faster processors or performing them in parallel on many CPU cores, as is possible for software tests.

% -----------------------------------------------------------------------------------------
\subsection{Main Contributions}\label{sec:maincontrib}

In this article, two novel model-based   requirements-driven
test strategies are introduced, and their fault coverage guarantees are proven. 
Models are represented as deterministic finite state machines (DFSMs).
We are aware of the fact that FSMs are not well-suited for modelling control systems with complex and large data types for interfaces and internal state variables. 
We have shown, however,  that such systems may be abstracted to FSMs
after having calculated state and input equivalence classes on the complex model which is 
interpreted, for example, as a Kripke Structure. Test suites with guaranteed fault coverage calculated for the FSM abstraction give rise to    equivalence 
class tests with likewise guaranteed fault coverage for the complex system~\cite{peleska_sttt_2014,Huang2017}. These considerations motivate the study
of testing theories for finite state machines. We restrict ourselves to deterministic systems, since determinism is always 
required in the context of safety-critical control systems.

\emph{Elementary requirements} are represented as triples $R(q,x,Z)$, where $q$ is a state
in the reference model, $x$ is an input to the DFSM, and $Z$ is a subset of the machine's output alphabet, representing the admissible outputs that may be produced by the SUT after having processed
any input sequence leading the reference model into state $q$.  A \emph{composite requirement} is specified as a combination of elementary ones. It is easy to see that implementations whose true behaviour can be
represented by a DFSM which is    
language-equivalent to that of
the reference DFSM automatically fulfil all specified requirements.

The two strategies have the following characteristics.
\begin{itemize}
\item The first strategy is \emph{exhaustive} in the sense that it 
provides test suites that {\it imply} requirements satisfaction when passed. When failed, 
the SUT is guaranteed to violate language equivalence, but it does not necessarily violate 
the composite requirement that is being tested. This approach is called 
\emph{exhaustive testing of (composite) requirements}.

\item The second strategy is \emph{complete} in the sense that it
provides test suites that are passed by the SUT {\it if and only if} it conforms to the specified requirement. When a test suite generated according to this strategy  is failed, it is guaranteed that the SUT violates the   requirement. On the other hand, these test suites do not uncover any violations of language conformance, if these violations are unrelated to the requirement. We call this approach \emph{complete testing of (composite) requirements}.
\end{itemize}

Observe that the terms  `exhaustive' and `complete' have been adopted from conformance testing, where 
a test suite is called `sound' if it is passed by all conforming implementations, `exhaustive' if non-conforming implementations will always fail at least one test case, and  `complete' if the suite is 
sound and exhaustive~\cite{DBLP:journals/cn/Tretmans96}.

The test suite sizes depend on the requirement and on  the difference between the maximal number $m$ of states assumed for the SUT and the (known) number of $n$ of states in the minimised reference model.
For the first strategy, a detailed evaluation shows that   test effort reductions between 20\% and 60\% 
can be achieved   in comparison to 
test suites generated to prove language equivalence. We explain why 
the second strategy is of significant theoretical value,
but usually results in larger test suites and is of lesser practical value than the first.

The work presented here generalises previous 
publications~\cite{DBLP:conf/pts/Huang017,Huang2018}, where requirements referred to outputs of
different criticality only, but could not be linked to states and inputs of the reference model.

% -------------------------------------------------------------------------------------
\subsection{FSM Library}

The test suites generated for the evaluation of the test strategies presented in this article  
have been   calculated using the 
\emph{fsmlib-cpp} library, an open source project programmed in C++. The library contains fundamental
algorithms for processing   Mealy Machine FSMs  
and a variety of model-based test generation algorithms. 
Download, contents, and installation of the library is explained in the lecture notes~\cite[Appendix B]{PeleskaHuangLectureNotesMBT} which are also publicly available.

% -------------------------------------------------------------------------------------
\subsection{Overview}

In Section~\ref{sec:defs}, basic definitions about finite state machines that are
needed for the elaboration of results are presented.
The notion of elementary and composite requirements is introduced   
in Section~\ref{sec:req}. The soundness of the concept is justified by proving that language equivalence
can be alternatively expressed by composite requirements.
In Section~\ref{sec:dfsmabs}, we present two DFSM abstractions that are induced by our requirements notion
and needed for the construction of complete test suites.
In Section~\ref{sec:mainimplication}, a pass-relation for requirements-driven tests and 
the first main theorem about complete test suites implying requirements satisfaction are presented. 
The application of this main theorem to practical requirements-driven testing is illustrated 
and evaluated by means of several experiments in Section~\ref{sec:rwexample}.
In Section~\ref{sec:maintheorem}, the second main theorem yielding complete test suites to be passed by the SUT if and only if it fulfils the requirements is introduced and proven. 
The two test strategies are compared with respect to their complexity and to their practical value in Section~\ref{sec:complexity}. Section~\ref{sec:related}  discusses related work, and
Section~\ref{sec:conc} presents a conclusion. In Appendix~\ref{sec:fsmlib}, it is explained how the test suites described in examples throughout the article can be re-generated using the fsmlib-cpp.

%====================================================================================== 
\section{Basic Definitions}\label{sec:defs}

A \emph{finite state machine (FSM)} is a tuple $M = (Q,\ii q, \Sigma_I, \Sigma_O, h)$ with
finite state space $Q$, initial state $\ii q\in Q$, finite input and output alphabets
$\Sigma_I, \Sigma_O$, and transition relation 
$h\subseteq Q\times \Sigma_I \times \Sigma_O \times Q$. 
The \emph{specified inputs} of a state $q\in Q$ denote the set $in(q)$ of input elements
defined by 
$in(q) = \{x\in \Sigma_I~|~\exists y\in\Sigma_O,q'\in Q: (q,x,y,q')\in h \}$.

The \emph{language $L(q)$ of a state $q\in Q$} is the set of all   finite sequences
$\tau = (x_1,y_1)\dots (x_k,y_k)\in(\Sigma_I\times \Sigma_O)^*$ for which states  $q_1,\dots,q_k$   satisfying  
\begin{equation}\label{eq:tracecondition}
(q,x_1,y_1,q_1)\in h,\dots,(q_{k-1},x_k,y_k,q_k)\in h
\end{equation}
can be found.
The empty sequence $\varepsilon$ is also an element of $L(q)$.
 Sequence $\tau$ is called an \emph{I/O-trace} of $q$, and its 
length   is denoted by $|\tau| = k$. 
Sequence $\ol x = x_1\dots x_k$ is called
an \emph{input trace}, and $\ol y = y_1\dots y_k$ an \emph{output trace}.
The \emph{language $L(M)$ of the FSM $M$} is the language $L(\ii q)$ of its initial state.
We also use the alternative notations $\ol x / \ol y$ for $\tau$ and $x_i/y_i$ for
$(x_i,y_i)$ in $\tau$.  Trace segments of some trace $\tau$ from   $p^{th}$ element 
to   $q^{th}$ element
are denoted  by $\tau^{[p..q]}$. If $p < q$ as, for example, in $\tau^{[1..0]}$, this denotes 
the empty trace $\varepsilon$.

If $L(M) = L(M')$, the two machines are called \emph{I/O-equivalent} or \emph{language equivalent}; if $L(M')\subseteq L(M)$, the former is a \emph{reduction} of the latter.

An FSM is \emph{deterministic}, and the machine is called a DFSM, if and only if a pair $(q,x)\in Q\times \Sigma_I$ is 
associated with {\it at most} one transition element $(q,x,y,q')\in h$. The FSM is called \emph{completely specified} if and only if every pair  $(q,x)\in Q\times \Sigma_I$
is associated with {\it at least} one transition element $(q,x,y,q')\in h$. 

For completely
specified DFSMs, every pair  $(q,x)\in Q\times \Sigma_I$
is associated with {\it exactly} one transition element $(q,x,y,q')\in h$. In this case,
the transition relation $h$ can be represented by a \emph{transition function}
$\delta : Q\times \Sigma_I \fun Q$ and an \emph{output function} 
$\omega : Q\times \Sigma_I\fun \Sigma_O$, such that  $(q,x,y,q')\in h$ if and only if
$(q,x)\in Q\times \Sigma_I$ and
$(q,x,y,q') = (q,x,\omega(q,x),\delta(q,x))$. Transition functions and 
output functions can be extended to sequences
of inputs in a natural way by setting

\medskip
\begin{minipage}[c]{.49\textwidth}
\begin{eqnarray*}
\delta & : & Q\times\Sigma_I^*\fun Q
\\
\delta(q,\varepsilon) & = & q
\\
\delta(q,x.\ol x) & = & \delta(\delta(q,x),\ol x)
\end{eqnarray*}
\end{minipage}
\begin{minipage}[c]{.49\textwidth}
\begin{eqnarray*}
\omega & : & Q\times\Sigma_I^*\fun \Sigma_O^*
\\
\omega(q,\varepsilon) & = & \varepsilon
\\
\omega(q,x.\ol x) & = & \omega(q,x).\omega(\delta(q,x),\ol x)
\end{eqnarray*}
\end{minipage}
 
\bigskip
An FSM is \emph{observable} if and only if state, input and output uniquely determine the 
target state, that is, if the transition relation fulfils
$$
   \forall (q,x,y,q'), (q,x,y,q'') \in h: q' = q''.
$$
DFSMs are automatically observable, and each nondeterministic non-observable FSM can be transformed to a language-equivalent observable one~\cite{luo_test_1994}.

The \emph{prime machine} of a DFSM $M$ is a  DFSM $M'$ with minimal number of states, such 
that $M$ and $M'$ have the same language.  $M'$ is uniquely determined up to isomorphism. The same holds for observable  nondeterministic FSMs~\cite{Starke72}.

For a DFSM $M$ and a state $q\in Q$, an input trace $\ol x$
of an I/O-trace $\tau\in L(q), \tau=\ol x/\ol y$, leads to a uniquely 
determined target state $q_k = \delta(q,\ol x)$, 
since there is exactly one state sequence $q_1,\dots,q_k$
and output trace $\ol y$
satisfying \eqref{eq:tracecondition}. We use notation $q\after\ol x = q_k$. Note that
this is a partially defined operator, if the DFSM is not completely specified. We extend its
domain with the empty input trace $\varepsilon$ by setting $q\after\varepsilon = q$.
We extend the $\after$ operator to set-valued right-hand side operands by setting 
$q\after X = \{ q\after\tau~|~\tau\in X\}$ for $X\subseteq \Sigma^*$.

A \emph{state cover} of a DFSM $M$ is a set $V \subseteq\Sigma_I^*$ of input traces 
such that for each
$q\in Q$, there exists a $v\in V$ satisfying $\ii q\after v = q$. In this article, it is
always required that $\varepsilon\in V$, because the empty trace reaches the initial state, 
i.e.~$\ii q\after\varepsilon = \ii q$. The following well-known 
lemma shows how a state cover can be
constructed in a very basic (though not optimal) way. It will be used in the proof of 
the main theorem.

\begin{lemma}
\label{lemma:statecov}
Let $M=(Q,\ii q,\Sigma_I,\Sigma_O,h)$ be an   
FSM over input alphabet $\Sigma_I$ and output alphabet $\Sigma_O$. Let $V\subseteq \Sigma_I^*$ 
be a finite  set of input traces containing the empty trace $\varepsilon$.  
Then either
\begin{enumerate}
\item $\ii q\after V$ contains all reachable states, i.e.,  
$\ii q\after V = \ii q\after\Sigma_I^*$, or 

\item 
$\ii q\after (V\cup V.\Sigma_I)$ contains at least one additional state which
is not contained in $\ii q\after V$, that is, 
$\ii q\after V\subsetneq\ii q\after (V\cup V.\Sigma_I)$.
\end{enumerate}
\end{lemma}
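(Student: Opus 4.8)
The plan is to prove the dichotomy by contradiction: assume that neither alternative holds and derive a contradiction. So suppose that $\ii q\after V$ does \emph{not} contain all reachable states, i.e.\ $\ii q\after V \subsetneq \ii q\after\Sigma_I^*$, and at the same time $\ii q\after(V\cup V.\Sigma_I) = \ii q\after V$; I will show these two assumptions are incompatible.

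The key observation is that the second assumption makes the set $\ii q\after V$ \emph{closed} under the transition relation in the following sense: for every state $q\in \ii q\after V$ and every specified input $x\in in(q)$, the successor state $q\after x$ again lies in $\ii q\after V$. Indeed, if $q = \ii q\after v$ for some $v\in V$, then $q\after x = \ii q\after (v.x)$, and $v.x \in V.\Sigma_I \subseteq V\cup V.\Sigma_I$, so by assumption $q\after x \in \ii q\after(V\cup V.\Sigma_I) = \ii q\after V$. First I would establish this closure property carefully, paying attention to the fact that $\after$ is only partially defined when $M$ is not completely specified — the argument only needs to quantify over inputs for which the successor actually exists.

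Next I would take an arbitrary reachable state $p \in \ii q\after\Sigma_I^*$, say $p = \ii q\after\ol x$ with $\ol x = x_1\dots x_k$, and show by induction on the length $k$ that $p \in \ii q\after V$. The base case $k=0$ uses $\varepsilon\in V$, which gives $\ii q = \ii q\after\varepsilon \in \ii q\after V$. For the induction step, the prefix $\ii q\after (x_1\dots x_{k-1})$ lies in $\ii q\after V$ by the induction hypothesis, and then one application of the closure property (with input $x_k$, which is by construction a specified input of that intermediate state) yields $p = (\ii q\after(x_1\dots x_{k-1}))\after x_k \in \ii q\after V$. Since $p$ was arbitrary, this gives $\ii q\after\Sigma_I^* \subseteq \ii q\after V$, hence equality, contradicting the first assumption. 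Finally I would note the reverse inclusion $\ii q\after V \subseteq \ii q\after(V\cup V.\Sigma_I)$ is trivial, so in alternative~2 the inclusion is genuinely strict as claimed.

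The main obstacle, such as it is, is bookkeeping around partial definedness: one must be slightly careful that every input trace appearing in the induction (every prefix of $\ol x$, and each one-step extension) is actually defined from the relevant state, but this is immediate because $\ol x$ was chosen so that $\ii q\after\ol x$ exists, and existence of the full trace entails existence of all prefixes and of each single-step continuation. Beyond that, the proof is a routine induction, so I would keep it short.
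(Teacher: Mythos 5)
Your argument is correct: the closure observation (if $\ii q\after(V\cup V.\Sigma_I)=\ii q\after V$ then $\ii q\after V$ is closed under one-step successors) combined with induction on the length of a reaching input trace, plus the trivial reverse inclusion $\ii q\after V\subseteq\ii q\after(V\cup V.\Sigma_I)$, establishes exactly the stated dichotomy, and your handling of partial definedness is adequate. The paper itself gives no proof here but delegates to \cite[Lemma~4.2]{PeleskaHuangLectureNotesMBT}, whose argument is essentially this same standard closure-plus-induction reasoning, so your proposal matches the intended proof.
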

\begin{proof}
See, for example,  \cite[Lemma~4.2]{PeleskaHuangLectureNotesMBT}.
\xbox
\end{proof}

%====================================================================================== 
\section{Requirements}\label{sec:req}

\subsection{Requirements and Acceptable Deviations -- Motivation}

In this section, elementary requirements are introduced as assertions stating that specific
transitions of the reference model need to be performed correctly, whenever the SUT resides 
in a state corresponding to the transition's source state in the reference model. 
While, just as in the case of conformance testing, the correct behaviour is specified by the reference model, an \emph{acceptable deviation} is also specified for each requirement. Such
a deviation consists of a set of outputs that are considered as ``harmless'' 
for the specified state and input pair, though the correct output specified in the reference model is still the preferred result. 

\begin{example}
\label{ex:monitor}
Consider a speed monitor in a train that automatically triggers the service brakes when the train exceeds the admissible velocity. If the emergency brakes are triggered instead of the service brakes, this would be considered as an acceptable deviation: the safety goal to slow down the train will be reached even better than with the service brakes. Only the passenger comfort is reduced, since the emergency brakes are much harder than the normal ones. On the other hand, if the brakes are released instead of triggering them, this would lead to a safety violation and therefore be an unacceptable deviation.\xbox
\end{example}

The reason to introduce acceptable deviations is that the test effort can be reduced in a considerable way if (a) only critical requirements are tested with complete strategies, instead of using a complete conformance suite, and (b) we abstain from insisting on the {\it exact} behaviour specified in the reference model, as long as the resulting deviations are still acceptable.

% -------------------------------------------------------------------------------------
\subsection{Elementary Requirements, Acceptable Deviations, and Satisfaction}

Formalising this intuitive concept, let $M=(Q, \ii q, \Sigma_I, \Sigma_O, h)$ be a reference model, represented by a deterministic, completely specified prime FSM
with $|Q|=n\ge 2$. Let $\omega_M:Q\times \Sigma_I\rightarrow \Sigma_O$ denote the output function of $M$. 
An \emph{elementary requirement} is denoted by $R(q,x,Z)$, where 
\begin{enumerate}
\item $q\in Q$,
\item $x\in \Sigma_I$, and
\item $Z\subset \Sigma_O$, such that $Z \neq \Sigma_O$ and $\omega_M(q,x)\in Z$. 
\end{enumerate}
Output $\omega_M(q,x)$ is called the \emph{expected output}, and the outputs $y \in Z-\{ \omega_M(q,x)\}$ are called \emph{acceptable deviations}. The set $Z$ is required to be
a true subset of the output alphabet, because it would not make sense to specify a requirement
where {\it any} output $y\in\Sigma_O$ is considered as acceptable.

An elementary requirement $R(q,x,Z)$ is \emph{fulfilled with acceptable deviations} 
(written $S\models R(q,x,Z)$)
by some SUT $S = (S,\ii s,\Sigma_I, \Sigma_O, h_s)$, if, after any input trace $\ol x\in \Sigma_I^*$ leading to $q$ in the reference model, the input
$x$ given to $S$ results in an output $z\in Z$. To formalise this notion, define
\[
\Pi(q) = \{ \ol x\in\Sigma_I^*~|~\ii q\after\ol x = q \};
\]
this is the (possibly infinite) set of all input traces reaching $q$ in 
reference model $M$, when starting from the initial state 
$\ii q$. Then specify
\begin{equation}
S \models R(q,x,Z)\ \text{if and only if}\ 
\forall \pi\in\Pi(q): \omega_S(\ii s\after\pi,x) \in Z.
\end{equation}

% -------------------------------------------------------------------------------------
\subsection{Composite Requirements for DFSMs}

A \emph{composite requirement $R$} is written as a conjunction
\[
R\equiv R(q_1,x_1,Z_1)\wedge \dots\wedge R(q_k,x_k,Z_k)
\]
of elementary requirements and interpreted in the natural way as
\[
S\models R \ \text{if and only if}\ (S\models R(q_1,x_1,Z_1)) \wedge\dots\wedge
 (S\models R(q_k,x_k,Z_k)).
\]
This means that the SUT $S$ fulfils each of the elementary requirements involved.

\begin{example}\label{ex:runninga}
As a running example, we will consider a reference model represented by
 the state machine $M$ shown in Fig.~\ref{fig:refmodelM}. The initial state $q_0$ 
 is marked by a double circle. On $M$, we specify the composite requirement
 \[ 
 R \equiv R(q_0,a,\{0,1\}) \wedge R(q_1,b,\{ 0,2\}) \wedge R(q_2,a,\{0,1\}).
 \]

 Consider an implementation with behaviour as modelled by the DFSM $S$ in Fig.~\ref{fig:sutmodelS}.
 The implementation is not language equivalent to $M$: for example, $\omega_M(q_0,a.a.a) = 1.0.0$, but
 $\omega_S(s_0,a.a.a) = 1.0.1$. However, the following example traces show that $S$ might still fulfil $R$ with acceptable deviation\footnote{This will be established more formally in the consecutive sections, where a complete test strategy for checking 
 the implementation's compliance to composite requirements is elaborated.}: (a) For sample input trace $b.b\in\Pi(q_0)$,
 we observe that
 $\omega_S(s_0\after b.b,a) = \omega_S(s_0,a) = 1 = \omega_M(q_0,a) = \omega_M(q_0\after b.b,a)$, 
 so  this result conforms to 
 $R(q_0,a,\{0,1\})$ without any deviation.
 (b) For $a.b\in\Pi(q_1)$, we get
 $\omega_S(s_0\after a.b,b) = \omega_S(s_1,b) = 0 = \omega_M(q_1,b) = \omega_M(q_0\after a.b,b)$, so  this result also 
 conforms to the applicable elementary requirement
 $R(q_1,b,\{0,2\})$ without any deviation.
 (c) For $a.a\in\Pi(q_2)$,    $\omega_S(s_0\after a.a,a) = \omega_S(s_0,a) = 1$, whereas
 $ \omega_M(q_0\after a.a,a) = \omega_M(q_2,a) = 0$. But this deviation is still acceptable, since the applicable elementary requirement   $R(q_2,a,\{0,1\})$ accepts both outputs $0$ and $1$.
 \qed
\end{example}

\begin{minipage}[t]{.48\textwidth}
\centering
 \includegraphics[width=0.7\textwidth]{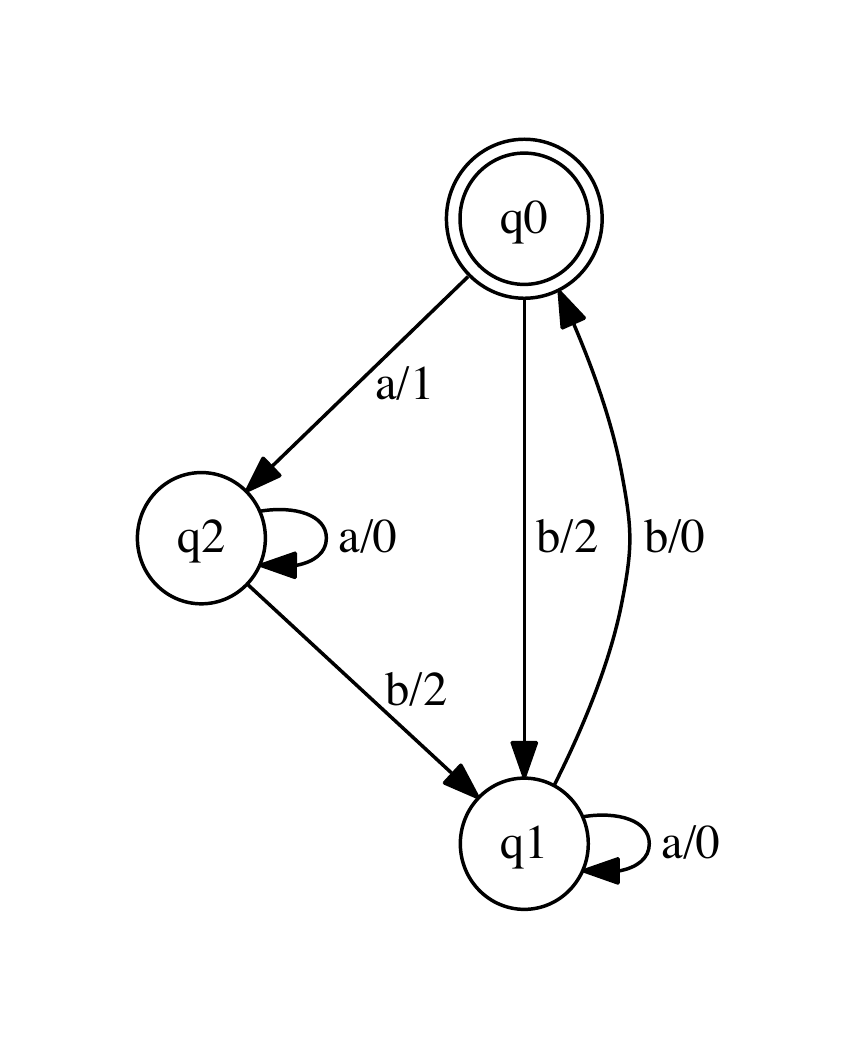}
 \captionof{figure}{Reference model $M$.}
\label{fig:refmodelM}
\end{minipage}
{~}
\begin{minipage}[t]{.48\textwidth}
\centering
 \includegraphics[width=0.55\textwidth]{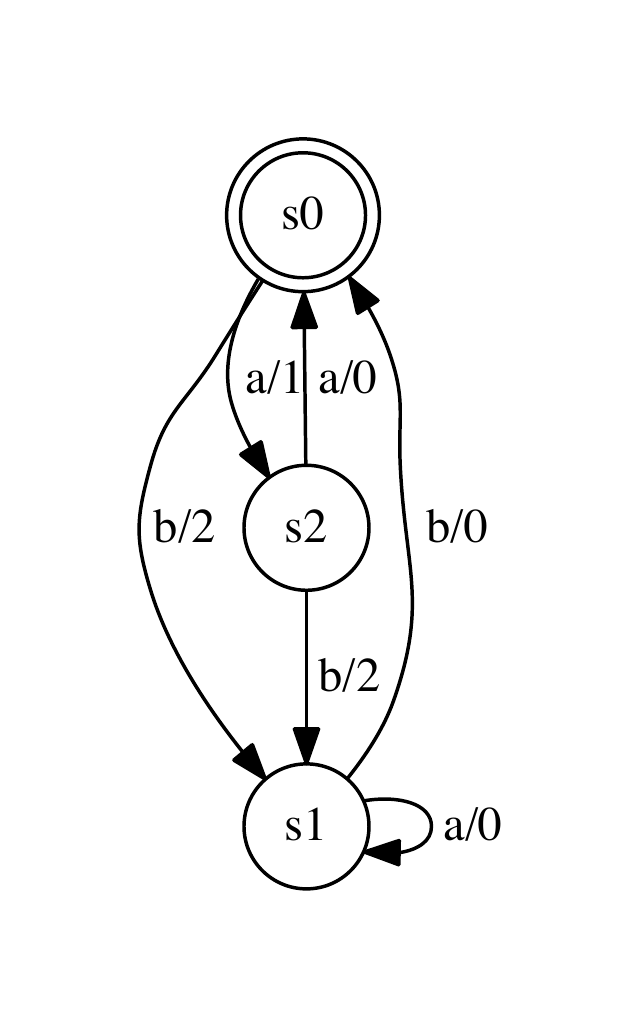}
 \captionof{figure}{Implementation model $S$.}
\label{fig:sutmodelS}
\end{minipage}

% -------------------------------------------------------------------------------------
\subsection{Characterisation of Language Equivalence  by Composite Requirements.}

A ``very large'' composite requirement can be defined by stating that the SUT should conform to 
{\it every} transition of the reference model, without leaving any alternatives for the expected output.
The following theorem states that this requirement exactly captures language equivalence.

\begin{theorem}\label{th:languageeq}
Let $M=(Q,\ii q, \Sigma_I, \Sigma_O,h_M)$ and  $S=(S,\ii s,\Sigma_I,\Sigma_O,h_S)$
be deterministic, completely specified 
prime machines over the same alphabet with output functions
$\omega_M:Q\times \Sigma_I\rightarrow {\Sigma_O}$  and 
$\omega_S:Q\times \Sigma_I\rightarrow {\Sigma_O}$, respectively.
Define composite requirement 
$$
R_{eq}=\bigwedge_{(q,x)\in Q\times \Sigma_I}R(q,x, \{\omega_M(q,x)\}).
$$ 
Then 
$$
L(S)=L(M)\quad\text{if and only if}\quad S\models R_{eq}.
$$
\end{theorem}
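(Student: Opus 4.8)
The plan is to prove both implications separately, working directly from the definition of $S \models R_{eq}$ and a standard induction on trace length.

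\medskip

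\textbf{Direction $L(S) = L(M) \Rightarrow S \models R_{eq}$.} Fix an arbitrary pair $(q,x) \in Q \times \Sigma_I$; I must show $S \models R(q,x,\{\omega_M(q,x)\})$, i.e., for every $\pi \in \Pi(q)$ we have $\omega_S(\ii s \after \pi, x) = \omega_M(q,x)$. Pick such a $\pi$, so $\ii q \after \pi = q$. Since $M$ is completely specified and deterministic, the trace $\pi / \omega_M(\ii q, \pi)$ followed by one more step $x / \omega_M(q,x)$ lies in $L(M)$. By hypothesis this trace is also in $L(S)$; since $S$ is deterministic and completely specified, the unique output $S$ produces on input sequence $\pi . x$ starting from $\ii s$ must agree componentwise, and in particular in its last component, giving $\omega_S(\ii s \after \pi, x) = \omega_M(q,x)$. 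As $(q,x)$ and $\pi$ were arbitrary, $S \models R_{eq}$.

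\medskip

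\textbf{Direction $S \models R_{eq} \Rightarrow L(S) = L(M)$.} Here I would prove the stronger statement, by induction on $k = |\ol x|$, that for every input trace $\ol x \in \Sigma_I^*$ with $\ii q \after \ol x$ defined (which is everything, since $M$ is completely specified), we have $\omega_M(\ii q, \ol x) = \omega_S(\ii s, \ol x)$ and moreover $\ol x \in \Pi(\ii q \after \ol x)$. The base case $k = 0$ is immediate since $\omega_M(\ii q, \varepsilon) = \varepsilon = \omega_S(\ii s, \varepsilon)$ and $\varepsilon \in \Pi(\ii q)$. For the inductive step, write $\ol x = \ol x' . x$ with $|\ol x'| = k$; let $q' = \ii q \after \ol x'$. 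By the induction hypothesis $\ol x' \in \Pi(q')$ and the output traces agree on the first $k$ steps, so $\ii s \after \ol x' $ is the state reached in $S$ by $\ol x'$. Applying $S \models R(q', x, \{\omega_M(q', x)\})$ with the witness $\pi = \ol x' \in \Pi(q')$ yields $\omega_S(\ii s \after \ol x', x) = \omega_M(q', x)$, which is exactly the $(k{+}1)$-st output component; combined with the induction hypothesis this gives $\omega_M(\ii q, \ol x) = \omega_S(\ii s, \ol x)$. Also $\ii q \after \ol x = q' \after x$, so $\ol x \in \Pi(\ii q \after \ol x)$ trivially. This completes the induction. Because for deterministic completely specified machines the language is exactly the graph of the extended output function, $\{ \ol x / \omega_M(\ii q, \ol x) : \ol x \in \Sigma_I^* \}$, the equality of output functions on all input traces gives $L(M) = L(S)$.

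\medskip

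\textbf{Main obstacle.} The proof is essentially bookkeeping; the only subtlety is that the definition of $S \models R(q,x,Z)$ quantifies over \emph{all} $\pi \in \Pi(q)$, whereas to establish language equivalence one needs to know that the specific prefix $\ol x'$ appearing in the induction is itself a member of $\Pi(q')$ and drives $S$ into the expected state. Carrying the auxiliary claim ``$\ol x \in \Pi(\ii q \after \ol x)$ and $S$ is driven to $\ii s \after \ol x$'' through the induction is what makes the argument go through cleanly, rather than trying to invoke the requirement on a trace one has not yet shown to reach the relevant state. The hypotheses that both machines are deterministic and completely specified are used throughout to guarantee that $\after$ is total and that each input trace has a single, well-defined output trace.
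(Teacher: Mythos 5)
Your proof is correct and essentially matches the paper's: the forward direction is the same observation spelled out in more detail, and your induction on $|\ol x|$ for the converse is just the contrapositive repackaging of the paper's shortest-counterexample argument (take a minimal $\ol x$ with $\omega_S(\ii s,\ol x)\neq\omega_M(\ii q,\ol x)$, split it as $\pi.x$, and observe that $S\not\models R(\ii q\after\pi,x,\{\omega_M(\ii q\after\pi,x)\})$). The auxiliary claim $\ol x'\in\Pi(\ii q\after\ol x')$ that you carry through the induction holds trivially by the definition of $\Pi$, so it needs no separate bookkeeping, but this does not affect correctness.
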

\begin{proof}
It is obvious that $L(S)=L(M)$ implies $S\models R_{eq}$, 
since $M\models R_{eq}$ by construction of $R_{eq}$.
Therefore, we only need   to show that $S\models R_{eq}$ implies $L(S)=L(M)$.
Suppose $L(S)\neq L(M)$. Then there exists  a shortest input sequence $\overline x\in \Sigma_I^*$ such that $\omega_S(\ii s,\overline x)\neq\omega_M(\ii q,\overline x)$, where $\omega_S, \omega_M$ are the natural extension of the output functions to input sequences 
introduced in Section~\ref{sec:defs}.  Since $\varepsilon\in L(M)\cap L(S)$, $|\overline x|\ge 1$ and $\overline x$ can be written as $\overline x=\pi.x$, for some $\pi\in \Sigma_I^*$ and $x\in \Sigma_I$. Then we have $\omega_S(\ii s,\pi)=\omega_M(\ii q,\pi)\wedge \omega_S(\ii s\after\pi,x)\neq \omega_M(\ii q\after \pi, x)$. 
Hence $S\not\models R(\ii q\after \pi, x, \{\omega_M(\ii q\after \pi, x)\})$, and this elementary requirement
is a conjunct of the composite requirement $R_{eq}$.
Therefore, 
$S\not\models R_{eq}$ follows.
\xbox
\end{proof}

% -------------------------------------------------------------------------------------
\section{Requirements-driven DFSM Abstraction}
\label{sec:dfsmabs}

Given a reference model $M$ and associated elementary requirements 
$R(q_i,x_i,Z_i)$ with $i=1,\dots,k$
 as introduced above, we will now introduce two abstractions 
$M \rightarrow M_1 \rightarrow M_2$ that are needed to create test suites
allowing to verify that an SUT $S$ fulfils all $R(q_i,x_i,Z_i)$ without having to test for the 
stronger property language equivalence.
\begin{enumerate}
\item $M_1$ abstracts from concrete $M$-outputs by using sets of output events
$Z_i\subset \Sigma_O$ where requirement $R(q_i,x_i,Z_i)$ is involved, and using the
whole output alphabet $\Sigma_O$ as {\it don't care symbol} to specify outputs unrelated to
any requirement.

\item $M_2$ is the prime machine of $M_1$. Note that $M_1$ may be no longer  minimal, 
since some
$M$-states may not be distinguishable anymore due to output abstraction.

%\item $M_3$ is constructed from $M_2$ by removing all transitions leading to target states from
%where none of the requirements-related states $q_1,\dots,q_k$ can be reached anymore. This machine may be no longer completely specified, and it cannot be used as a test oracle, because 
%the target state of a requirements-related transition from some $q_i$ under input $x_i$ may no
%longer be a state of $M_3$, so that this transition does not exist in $M_3$ as well. 
%$M_3$, however, is essential in determining the input traces leading to SUT-states from where
%a requirement  needs to be $R(q_i,x_i,Z_i)$ checked.
\end{enumerate}

% ...................................................................................
\subsection{Construction of $M_1$}
Let  $*=\Sigma_O$ denote the \emph{don't care symbol} specifying that ``any output is allowed''  
in certain situations. Define $\Sigma_O'=\{*, Z_1, \dots, Z_k\}$ as the new output alphabet of a
completely specified, deterministic abstraction $M_1$ with state space $Q$, 
initial state $\ii q$, 
input
alphabet $\Sigma_I$, and output and transition functions
specified as follows.
\begin{enumerate}
\item The transition function $\delta_{M_1}$ of $M_1$ coincides with  that of $M$, 
that is, $\delta_{M_1} = \delta_M$. 
\item The output function $\omega_{M_1}$ of $M_1$ is defined by
\begin{align}
\omega_{M_1}(q_i,x_i) & =  Z_i &\ \text{for}\ i\in \{1,\dots,k\}
\\
\omega_{M_1}(q,x) & =  * &\ \text{for}\  (q,x)\in Q\times \Sigma_I
                         \setminus \{ (q_1,x_1),\dots,(q_k,x_k) \}
\end{align}
\end{enumerate}
By construction, $M_1$ abstracts all outputs related to elementary 
requirements $R(q_i,x_i,Z_i)$ to
$Z_i$ and all outputs that are unrelated to any requirement to $*$. Being unrelated to any requirement means that the (state,input)-pair $(q,x)$ occurring in the 
original transition $(q, x, y, q')\in h$
differs from all $(q_i,x_i)$ used in 
the specification of some requirement $R(q_i,x_i,Z_i)$.

\subsection{Construction of $M_2$} \label{sec:m2}
Let $M_2$ be the prime machine of $M_1$. If $\{q_1,\dots,q_k\}=Q$ and $(x_i,Z_i)=(x_j,Z_j)$,
for all $i,j=1,\dots, k$, then $M_2$ contains only one state, otherwise, $M_2$ contains at least two states.
Denote the states in $M_2$ by $[q]$, which is the equivalence class of state $q$ in $M_1$ : 
$$
[q]=\{q'\in Q~|~L_{M_1}(q')= L_{M_1}(q)\}
$$
The transition relation of $M_2$ is denoted by $h_2=\{([q],x,y,[q'])~|~(q, x, y, q')\in h_1\}$. We have $[q\after \ol x]=[q]\after \ol x$, for any $q\in Q$ and $\ol x\in \Sigma_I^*$, where the transition "$\after$" from the left side is according to $h$ and $h_1$, since $M_1$ contains the same transitions of $M$ up to the outputs. The "$\after$" from the right side is according to $h_2$. 

Let $n'=|Q'|\ge 1$.   
For any $q_i,q_j\in \{q_1,\dots,q_k\}$, $[q_i]=[q_j]$ implies $x_i/Z_i=x_j/Z_j$. 
Let $q\in Q$ and suppose $[q]=[q_i]$ for some $i=1,\dots, k$. Then $(q,x_i,Z_i)\in \{(q_1,x_1,Z_1),\dots, (q_k,x_k,Z_k)\}$, because otherwise the output of 
$(q, x_i)$ in $h_1$ is $*$ and $*\neq Z_i$, contradicting the fact that 
$q$ and $q_i$ are equivalent. Hence, from $[q]=[q_i]$ follows $q=q_j$ for some $j=1,\dots, k$. 
Define 
$$
[\Pi(q_i)]=\bigcup_{j\in\{\ell\in\{1,\dots,k\}~|~[q_\ell]=[q_i]\}}\Pi(q_j).
$$ 
This is the set of all input traces leading to states $q_j$ in $M$ that are equivalent 
to $q_i$ in $M_1$.

\begin{example}\label{ex:m1construction}
For the reference model $M$ introduced in Example~\ref{ex:runninga} and the composite requirement $R$ specified there,
the abstracted machine $M_1$ is depicted in Fig.~\ref{fig:abstractedM1}. Its minimised machine $M_2$ is shown in
Fig.~\ref{fig:minimisedM1}.
\qed
\end{example}

\begin{minipage}[t]{.48\textwidth}
\centering
 \includegraphics[width=0.7\textwidth]{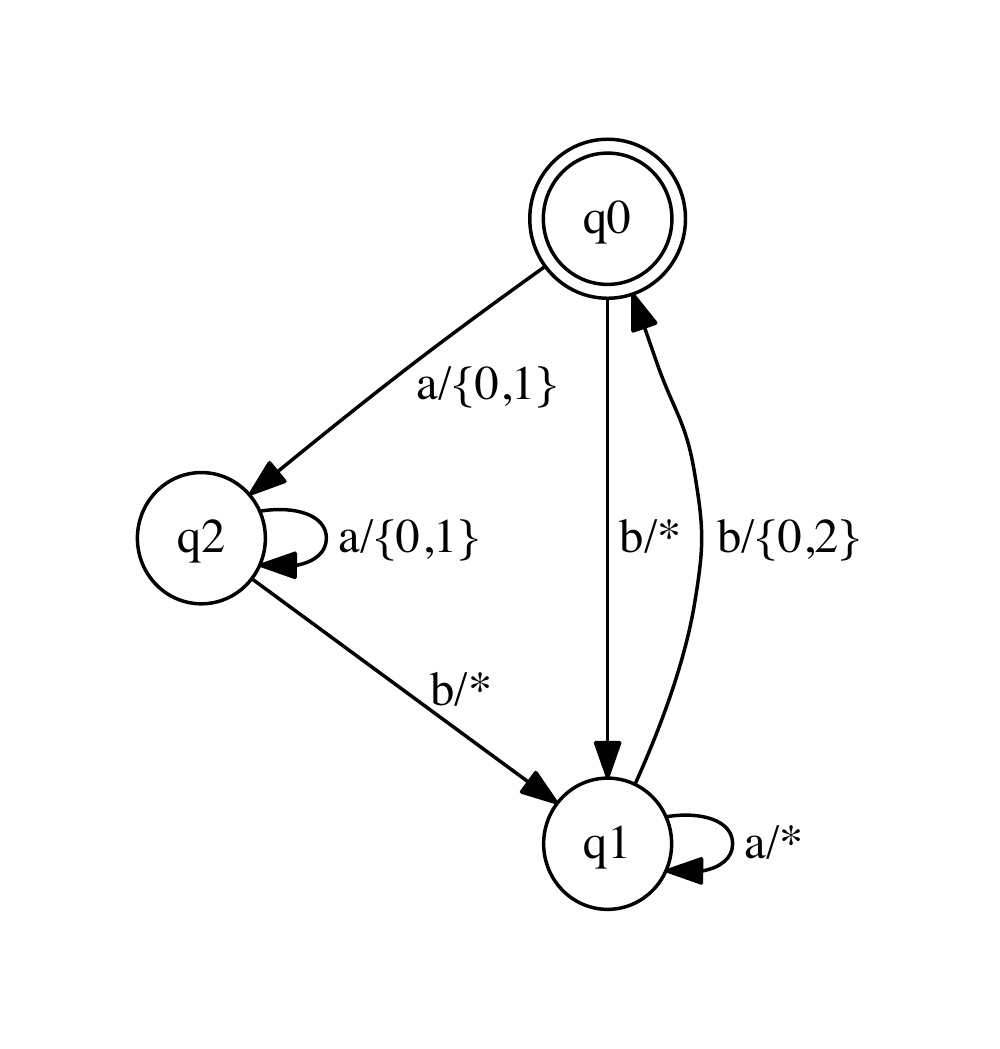}
 \captionof{figure}{Abstraction $M_1$ of reference model $M$ from Example~\ref{ex:runninga}.}
\label{fig:abstractedM1}
\end{minipage}
{~}
\begin{minipage}[t]{.48\textwidth}
\centering
 \includegraphics[width=0.55\textwidth]{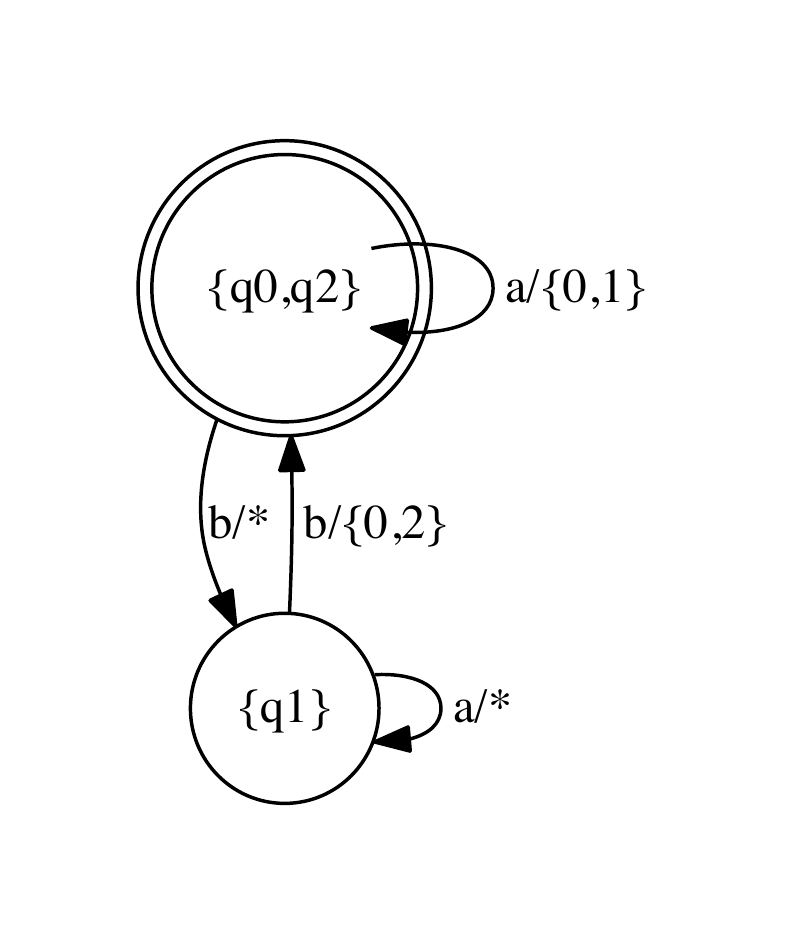}
 \captionof{figure}{Minimised version of $M_1$.}
\label{fig:minimisedM1}
\end{minipage}

%==========================================================================================
\section{Exhaustive Testing of Composite Requirements}
\label{sec:mainimplication}

% -----------------------------------------------------------------------------------------
\subsection{Test Cases, Test Suites, and Pass Criteria}
A \emph{test suite} $\TS$ is a set of input traces $\ol x\in\Sigma_I$; the latter are called \emph{test cases}. 
The \emph{expected result} associated with test case $\ol x$ is the output trace
$\omega_M(\ii q,\ol x)$ calculated from the reference model.
An implementation 
$S=(S,\ii s,\Sigma_I,\Sigma_O,h_S)$ \emph{passes} test case $\ol x$, if and only if 
the associated outputs observed when running $\ol x$ against $S$ conform to the expected results. More formally,
\[
S\ \pass_\Rightarrow\ \ol x \ \ \text{if and only if}\ \ \omega_S(\ii s,\ol x) = \omega_M(\ii q,\ol x).
\]
Implementation $S$ \emph{passes the test suite $\TS$} if and only if $S$ passes all test cases, that is,
\[
S\ \pass_\Rightarrow\ \TS \  \ \text{if and only if}\ \  \forall \ol x:\TS: S\ \pass_\Rightarrow \ol x.
\] 
We say that a test suite $\TS$ is \emph{exhaustive with respect to composite requirement $R = R(q_1,x_1,Z_1)\wedge \dots \wedge R(q_k,x_k,Z_k)$} if and only if
\begin{equation}\label{eq:suffcomplete}
S\ \pass_\Rightarrow\ \TS \Rightarrow S\models R.  
\end{equation}

In the remainder of this section, we will construct a finite test suite $\TS_\Rightarrow$, whose size depends on the reference model $M$, the (composite) requirement,   and the assumed maximal number $m$ of states in the
implementation DFSM $S$. We will show that $\TS_\Rightarrow$ is exhaustive 
with respect to composite requirement $R = R(q_1,x_1,Z_1)\wedge \dots \wedge R(q_k,x_k,Z_k)$ 
for the fault domain of completely specified, deterministic FSMs with at most $m$ states.

Obviously, if $S$ is I/O-equivalent to $M$, then $S$ satisfies the composite requirement $R$. Hence, any SUT $S$, with $S\not\models R$  will fail 
any complete test suite for fault model $F=(M,\sim, D)$, where $S\in D$ and $\sim$ is the language equivalence conformance relation. Therefore, any complete test suite for fault model $F=(M,\sim, D)$ with $S\in D$ satisfies the above two conditions.
Our objective is to introduce a test strategy leading to {\it fewer} test cases than the 
well-known strategies for generating test suites checking language equivalence.

It should be emphasised that the expected results associated with each test case reject any deviation of the SUT from the outputs expected according to the reference model. Therefore, if $S$ is in a state corresponding to some state $q_i$ in $M$ and reacts to input $x_i$ with some
output $y$ which is in $Z_i$ but which differs from the expected value $\omega_M(q,x_i)$, the
test execution will fail. It will be illustrated below, however, that the introduction of
admissible deviation $y\in Z_i-\{ \omega_M(q,x_i) \}$ may lead to fewer test cases. This 
reduction comes at the cost that some violations of expected results may be overlooked; but
it can still be guaranteed that these violations are always admissible deviations. For transitions that are not linked to any elementary requirement $R(q_i,x_i,Z_i)$, no guarantees are made whatsoever. 

Moreover, note that according to condition \eqref{eq:suffcomplete}, failing the test suite does not necessarily imply that $R$ is violated: the test suite may also fail because 
an erroneous output is uncovered which is unrelated to any elementary requirement in $R$. This conforms to our understanding of a ``reasonably designed'' test oracle. Any detected deviation in comparison to the reference model will lead to the test case to fail; it is just not guaranteed that {\it all} errors that are unrelated to $R$ will be uncovered.

% -----------------------------------------------------------------------------------------
\subsection{Test Suite Construction}

Let $V$, $\varepsilon\in V$, be a state cover of $M$. 
We define three auxiliary sets $A, B, C$ containing pairs of input traces. 
\begin{align}
A&=V\times V\\
B&=V\times V.\bigcup_{i=1}^{m-n+1} \Sigma_I^{i}\\
C&=\{(\alpha,\beta)~|~ \alpha\in \text{pref}(\beta), \alpha, \beta\in V.\bigcup_{i=1}^{m-n+1} \Sigma_I^{i}\}
\end{align}
For any $\alpha,\beta\in A\cup B\cup C$, define 
\begin{align}
\Delta_M(\alpha,\beta)&=\{\gamma\in \Sigma_I^*~|~\omega_M(\ii q\after\alpha,\gamma) \neq
\omega_M(\ii q\after\beta,\gamma)\}\\
%%\ii q\after\alpha\stackrel{\gamma}{\not\sim_{M}}\ii q\after\beta\}\\
\Delta_{M_1}(\alpha,\beta)&=\{\gamma\in \Sigma_I^*~|~\omega_{M_1}(\ii q\after\alpha,\gamma) \neq
\omega_{M_1}(\ii q\after\beta,\gamma)\}\\
%%%\ii q\after\alpha\stackrel{\gamma}{\not\sim_{M_1}}\ii q\after\beta\}
\end{align}
The set $\Delta_M(\alpha,\beta)$ contains all input traces distinguishing the states
$q\after\alpha$ and $q\after\beta$ in $M$ by yielding different output traces when applied to these
states. Set $\Delta_{M_1}(\alpha,\beta)$ contains all input traces 
distinguishing the states
$q\after\alpha$ and $q\after\beta$ in $M_1$.
Note that $\Delta_M(\alpha,\beta)$ and $\Delta_{M_1}(\alpha,\beta)$
may be empty, since $\ii q\after \alpha$ and 
$\ii q\after \beta$ are not necessarily distinguishable in $M$ or $M_1$, respectively.
For an arbitrary set $P\subseteq \Sigma_I^*\times \Sigma_I^*$ of pairs of input traces,  
define $P(M)\subseteq P$ and $P(M_1)\subseteq P$ by 
\begin{align}
(\alpha,\beta)\in P(M)&\Leftrightarrow (\alpha,\beta)\in P\wedge\Delta_M(\alpha,\beta)\neq \varnothing\\
(\alpha,\beta)\in P(M_1)&\Leftrightarrow (\alpha,\beta)\in P\wedge\Delta_{M_1}(\alpha,\beta)\neq \varnothing
\end{align}
We will apply this notation to the sets $A, B, C$ defined above: $A(M)$, for example, is the 
subset of all trace pairs $(\alpha,\beta)$ 
from $A$ whose target states $\ii q\after\alpha$ and $\ii q\after\beta$, respectively,
  are distinguishable in $M$.

% -----------------------------------------------------------------------------------------
\subsection{Main Theorem on Exhaustive Test Suites}
The following main theorem shows that the criteria \eqref{eq:tsone} and \eqref{eq:tstwo} 
suffice to guarantee that the test suite $\TS_\Rightarrow$ is exhaustive.

\begin{theorem}\label{th:se}
Let $m\ge n$ be a positive integer. Let $S=(S, \ii s, \Sigma_I,\Sigma_O, h_s)$ be a minimal DFSM with $|S|\le m$. Let $\TS_\Rightarrow\subseteq \Sigma_I^*$  be any test suite satisfying 
\begin{align}
&V.\bigcup_{i=0}^{m-n+1}\Sigma_I^i\subseteq \TS_\Rightarrow,\, \text{and}\label{eq:tsone}\\
&\forall (\alpha,\beta)\in A(M)\cup B(M_1)\cup C(M_1): \exists \gamma\in \Delta_M(\alpha,\beta): \alpha.\gamma, \beta.\gamma\in \TS_\Rightarrow
\label{eq:tstwo}
\end{align}
Then $\TS_\Rightarrow$   is exhaustive for composite requirement
 $R=R(q_1,x_1,Z_1)\wedge \dots \wedge R(q_k,x_k,Z_k)$, that is,
\[
S\ \pass_\Rightarrow\ \TS_\Rightarrow \Rightarrow S\models R.
\]
\end{theorem}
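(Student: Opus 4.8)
The plan is to show that, under the hypothesis $S\ \pass_\Rightarrow\ \TS_\Rightarrow$, the implementation $S$ conforms to the abstraction $M_1$ on \emph{every} input trace, and then to observe that this already entails $S\models R$. Call $S$ \emph{$M_1$-conformant} if $\omega_S(\ii s\after\ol u,x)\in\omega_{M_1}(\ii q\after\ol u,x)$ for all $\ol u\in\Sigma_I^*$ and $x\in\Sigma_I$; this is meaningful because $S$ and $M$ are completely specified and $\delta_{M_1}=\delta_M$. Two preliminary facts drive the argument. First, $\omega_M(q,x)\in\omega_{M_1}(q,x)$ always holds: on a requirement pair $(q_i,x_i)$ we have $\omega_M(q_i,x_i)\in Z_i=\omega_{M_1}(q_i,x_i)$ by the definition of an elementary requirement, and on every other pair $\omega_{M_1}(q,x)=\Sigma_O$. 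Hence passing a test case $\ol x$ makes all outputs produced by $S$ along $\ol x$ lie in the corresponding $M_1$-output sets, so the $M_1$-conformance condition holds for every $\ol u,x$ with $\ol u.x$ a prefix of some test case in $\TS_\Rightarrow$. Second, because $M$ is prime, $\Delta_M(\alpha,\beta)=\varnothing$ is equivalent to $\ii q\after\alpha=\ii q\after\beta$, and since $M_1$ has the same transition function as $M$, $\Delta_{M_1}(\alpha,\beta)\neq\varnothing$ implies $\Delta_M(\alpha,\beta)\neq\varnothing$; thus the distinguishing traces $\gamma$ required in \eqref{eq:tstwo} exist, they separate $\ii q\after\alpha$ and $\ii q\after\beta$ already in the finer machine $M$, and so the exact-match oracle applied to $\alpha.\gamma$ and $\beta.\gamma$ forces $\ii s\after\alpha\neq\ii s\after\beta$ for every pair $(\alpha,\beta)$ in one of the tested sets $A(M)\cup B(M_1)\cup C(M_1)$.

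Granting $M_1$-conformance, $S\models R$ follows at once: for a conjunct $R(q_i,x_i,Z_i)$ and any $\pi\in\Pi(q_i)$ we have $\ii q\after\pi=q_i$, so applying $M_1$-conformance to the trace $\pi.x_i$ at its last position yields $\omega_S(\ii s\after\pi,x_i)\in\omega_{M_1}(q_i,x_i)=Z_i$, which is precisely the defining condition of $S\models R(q_i,x_i,Z_i)$.

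The substance of the proof is the passage from ``$M_1$-conformant on test prefixes'' to ``$M_1$-conformant everywhere'', a $W$/$Wp$-method-style completeness argument carried out against $M_1$ (equivalently its prime machine $M_2$) rather than against $M$. The steps I would carry out are: (1) using the pairs in $A(M)$ and the second preliminary fact, the $n$ states $\ii q\after V$ are witnessed by at least $n$ pairwise-distinct states of $S$, whence $n\le|S|\le m$ and, by iterating Lemma~\ref{lemma:statecov}, $\ii s\after\big(V.\bigcup_{i=0}^{m-n}\Sigma_I^i\big)$ is the whole state set of $S$; (2) the core combinatorial step: along any trace $v.a$ with $v\in V$ and $|a|=m-n+1$, the $m-n+1$ states traversed after the $v$-prefix cannot all be pairwise distinct and all lie outside $\ii s\after V$, since that would exhibit at least $n+(m-n+1)=m+1$ distinct states of $S$; hence either two of these states coincide, giving a prefix pair $(\beta'',\beta')\in C$, or one of them lies in $\ii s\after V$, giving a pair $(v',\beta')\in B$; were this pair in $C(M_1)$ resp.\ $B(M_1)$, \eqref{eq:tstwo} would force the two $S$-states apart, a contradiction, so $\Delta_{M_1}(\beta'',\beta')=\varnothing$ resp.\ $\Delta_{M_1}(v',\beta')=\varnothing$, i.e.\ the colliding traces reach $M_2$-equivalent states; (3) iteratively folding long access traces through such ``harmless'' collisions, one shows that each state of $S$ carries a well-defined $M_2$-label $[\ii q\after v]$, realised by some access trace $v\in V.\bigcup_{i=0}^{m-n}\Sigma_I^i$; (4) finally, for an arbitrary $\ol u\in\Sigma_I^*$ choose such a $v$ with $\ii s\after v=\ii s\after\ol u$ and $[\ii q\after v]=[\ii q\after\ol u]$; then for every $x\in\Sigma_I$ the trace $v.x$ lies in $V.\bigcup_{i=0}^{m-n+1}\Sigma_I^i\subseteq\TS_\Rightarrow$, so, using that $S$ passes $v.x$, $\omega_S(\ii s\after\ol u,x)=\omega_S(\ii s\after v,x)=\omega_M(\ii q\after v,x)\in\omega_{M_1}(\ii q\after v,x)=\omega_{M_1}(\ii q\after\ol u,x)$, which establishes $M_1$-conformance.

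I expect step (3) to be the main obstacle: making the pigeonhole count of step (2) work with the budget $m-n+1$ hinges on having used $A(M)$ rather than $A(M_1)$ to pin down $n$ (not $n'=|M_2|$) distinct $S$-states, and the subsequent reconciliation of possibly \emph{incomparable} short access traces leading to the same $S$-state --- which is exactly what the prefix-pair set $C(M_1)$ is designed to support via iterated folding --- is the place where a careless argument collapses. The asymmetry between $A(M)$ on the one hand and $B(M_1),C(M_1)$ on the other, together with the necessity of $C(M_1)$ for excluding spurious cycles in $S$, are the subtle ingredients that must be handled with care.
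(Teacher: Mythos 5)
Your overall strategy---show that a passing $S$ conforms to $M_1$ on \emph{every} input trace (equivalently, $L(S)\subseteq L(M_1')$) and read off $S\models R$ from that---is viable, and your steps (1), (2) and (4) are correct. The paper's own proof is essentially the minimal-counterexample formulation of the same idea: it fixes a shortest trace $v.\tau$ witnessing a violation of some $R(q_t,x_t,Z_t)$, applies your pigeonhole argument to the $m+1$ pairwise prefix-comparable traces $V\cup\{v.\tau_1,\dots,v.\tau_{m-n+1}\}$, and uses the resulting harmless ($M_1$-equivalent) collision to produce a strictly shorter witness, contradicting minimality. Because the paper only ever folds that one offending trace, both the termination of the folding and the pairwise distinctness of the $m+1$ traces come for free from the minimality of $\tau$; your global version has to supply both, and that is exactly where your proposal stops short.

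The gap is step (3). As stated---``each state of $S$ carries a well-defined $M_2$-label''---it is both unproved and stronger than what the test observations can support: two \emph{incomparable} access traces $v_1.a_1$ and $v_2.a_2$ (with $v_1\neq v_2\in V$ and $a_1,a_2$ nonempty) that happen to reach the same $S$-state form no pair in $A\cup B\cup C$, so nothing in $\TS_\Rightarrow$ forces $[\ii q\after v_1.a_1]=[\ii q\after v_2.a_2]$; a passing implementation may genuinely carry two distinct $M_2$-classes on one state. Fortunately, step (4) does not need well-definedness, only the per-trace statement: \emph{for each $\ol u$ there exists some $v\in V.\bigcup_{i=0}^{m-n}\Sigma_I^i$ with $\ii s\after v=\ii s\after\ol u$ and $[\ii q\after v]=[\ii q\after\ol u]$}. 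This is provable by iterating your step (2): write $\ol u=v.\ol a$ with $v$ the longest prefix of $\ol u$ lying in $V$ (so no $v.\ol a^{[1..i]}$ with $i\ge1$ lies in $V$ and the $m+1$ traces are genuinely distinct); if $|\ol a|\ge m-n+1$, a harmless $B$- or $C$-collision lets you replace $\ol u$ by $v'.\ol a^{[j+1..|\ol a|]}$ or by $v.\ol a^{[1..i]}.\ol a^{[j+1..|\ol a|]}$, which preserves the reached $S$-state, preserves the $M_2$-class (using $[q\after\ol x]=[q]\after\ol x$), and strictly decreases the length of the suffix following the $V$-prefix; recurse until that suffix has length at most $m-n$. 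With step (3) replaced by this argument---and the label-well-definedness claim dropped---your proof goes through; as written, the central step is missing and its stated form is not a fact you can establish.
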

\begin{proof}
Suppose $S\ \pass_\Rightarrow\ \TS_\Rightarrow$.
We first show that  $V.\bigcup_{i=0}^{m-n} \Sigma_I^{i}$ is a state cover of $S$. Since $V$ is
a state cover of $M$ and $M$ is a prime machine, there are input traces 
$\alpha.\gamma, \beta.\gamma$ in $\TS_\Rightarrow$ for each pair of the $n$ states in $Q$, such that $\alpha,\beta\in V$ and states $\ii q\after \alpha$ and
$\ii q\after\beta$ are distinguished by $\gamma$ in $M$. Since $S$ passes these test cases, 
this also distinguishes $n$ states in $S$. Since $S$ has at most $m$ states, Lemma~\ref{lemma:statecov} can be applied to conclude that  $V.\bigcup_{i=0}^{m-n} \Sigma_I^{i}$ reaches all states of $S$.

Suppose $S\,\not\models\, R$. Then there is some $t\in\{ 1,\dots, k\}$ with $S\not\models R(q_t,x_t,Z_t)$. Hence there exists 
$\pi\in \Pi(q_t)$  with $\omega_M(\ii q\after\pi,x_t)\in Z_t$, but 
$\omega_S(\ii q\after\pi,x_t)\not\in Z_t$. Lifting these observations to the abstraction $M_2$
introduced in Section~\ref{sec:m2}, this induces the existence of input traces 
$\pi_2\in [\Pi(q_t)]$
such that $\omega_{M_2}([\ii q]\after\pi_2,x_t) = Z_t$ and 
$\omega_S(\ii s\after\pi_2,x_t)\not\in Z_t$.   
Since $\varepsilon \in V$, each   $\pi_2\in [ \Pi(q_t)]$ 
can be structured as $\pi_2 = v.\tau$ with
$v\in V$ and $\tau\in \Sigma_I^*$.
Let $\tau$ be a shortest sequence such that 
\begin{equation} 
\exists v\in V: v.\tau\in [ \Pi(q_t)]   \wedge \omega_S(\ii s\after v.\tau,x_t)\not\in Z_t.\label{eq:shorti2}
\end{equation}
 Since $S\ \pass_\Rightarrow\ \TS_\Rightarrow$ and  $V.\bigcup_{i=0}^{m-n+1} \Sigma_I^{i}\subseteq \TS_\Rightarrow$, we have 
 $|\tau|\ge m-n+1$, because otherwise, the input trace $v.\tau.x_t$ would have been tested, and the   test suite would have failed. 
 
Let $\tau_i=\tau^{[1..i]}, i\le m-n+1$. Then $v.\tau_i\neq v.\tau_j$ for all $i\neq j$. Let $V=\{v_1,\dots, v_n\}$. Then $v_i\neq v_j$ for $i\neq j$, since $V$ reaches $n$ different 
states in $M$. 
Suppose $v_i=v.\tau_j$ for some $i,j$. Let $\iota$ be the suffix of $\tau_j$ with $\tau_j.\iota=\tau$. Then $v_i.\iota=v.\tau_j.\iota=v.\tau\in [ \Pi(q_t)]$,  and $|\iota|<|\tau|$, a contradiction to the assumption that $\tau\in \Sigma_I^*$ is a shortest sequence satisfying condition~\eqref{eq:shorti2}. 
As a consequence, the set 
$U = \{v_1,\dots, v_n, v.\tau_1, \dots, v.\tau_{m-n+1}\}$ contains  $m+1$ elements. Observe that
$U\subseteq \TS_\Rightarrow$, since all elements start with a $v\in V$ and -- if extended by some $\tau_i$ -- are followed by an input trace of length less or equal to $m-n+1$.

Since $S$ contains only $m$ states, 
there exist $\alpha\neq \beta\in  U$ reaching the same state in $S$, that is, 
$\ii s\after\alpha=\ii s\after\beta$. Assume that 
$\ii q\after\alpha, \ii q\after\beta$ are {\it not} equivalent in $M_1$. This would
imply $\{\alpha,\beta\}\subseteq A(M) \cup B(M_1)\cup C(M_1)$ and the existence of a 
distinguishing trace $\gamma\in\Delta_M(\alpha,\beta)$ such that 
$\{ \alpha.\gamma, \beta.\gamma \} \subseteq \TS_\Rightarrow$ (see \eqref{eq:tstwo}). 
This would lead to a failed test execution,
since $\gamma$ cannot distinguish  $\ii s\after\alpha=\ii s\after\beta$, but would lead
to different outputs when applied to $\ii q\after\alpha$ and $\ii q\after\beta$. This contradiction implies that $\ii q\after\alpha, \ii q\after\beta$ are   equivalent in $M_1$.
As a consequence, $\{ \alpha, \beta \}$ cannot be contained in $V$, since pairs from $V$ 
always reach distinguishable states in $M$.  Without loss of generality, we therefore 
assume that one of the cases (a) $\alpha=v_i\in V, \beta=v.\tau_j$,
 or (b)~$\alpha=v.\tau_i, \beta=v.\tau_j, i<j$ applies.

 Let $\iota$ be the suffix of $\tau_j$ with $\tau_j.\iota=\tau$ 
 (in the case $j=m-n+1=|\tau|, \iota=\varepsilon$). Then $\beta.\iota=v.\tau$. Since $\ul q\after\alpha, \ul q\after\beta$ are equivalent states in $M_1$, we obtain that $\ul q\after\beta.\iota$ and  $\ul q\after\alpha.\iota$ are equivalent states in $M_1$,
 hence $\alpha.\iota\in [ \Pi(q_t)]$. Since $\ii s\after \beta=\ii s\after \alpha$ in $S$, 
 we conclude
 that $\ii s\after \beta.\iota=\ii s\after \alpha.\iota$ and $ \omega_S(\ii s\after \beta.\iota,x_t)=\omega_S(\ii s\after \alpha.\iota, x_t)\not \in Z_t$. In the case $\alpha=v_i\in V$, we have $|\iota|<|\tau|$. For $\alpha=v.\tau_i$, we have $\alpha.\iota=v.\tau_i.\iota$ and $|\tau_i.\iota|<|\tau|$. Both cases contradict the assumption that 
 $\tau\in \Sigma_I^*$ is a shortest sequence satisfying condition~\eqref{eq:shorti2}. 
 This contradiction implies that $S\ \pass_\Rightarrow\ \TS_\Rightarrow$ and
 $S \not\models R$  cannot both be true and completes the proof of the theorem.
\xbox
\end{proof}

Algorithms for creating test suites according to formulas \eqref{eq:tsone} and \eqref{eq:tstwo} have been described in the original work~\cite{DBLP:conf/forte/DorofeevaEY05} and in~\cite{Huang2018}.

\begin{example}
\label{ex:exhaustive}
Consider again reference model $M$ and the (in black box testing practise unknown) implementation model $S$ from Example~\ref{ex:runninga}. $M$ is minimal and has $n=3$ states. Under the hypothesis that the minimised model of the true implementation behaviour also has $m=3$ states, the following test suite with 4 test cases has been 
calculated according to the rules \eqref{eq:tsone} and \eqref{eq:tstwo}.
\[
a.a.b, a.b.b, b.a.b, b.b.a
\]
Applying this test suite to the implementation model $S$ results in I/O-traces
\[
a.a.b/1.0.2,
a.b.b/1.2.0,
b.a.b/2.0.0,
b.b.a/2.0.1
\]
These executions conform to the reference model $M$, so the implementation passes the suite. 
This shows that $S \models R$ holds.

Applying the complete H-method to $M$ and hypothesis $m=n=3$  results in 5 test cases.
These comprise the ones calculated according to \eqref{eq:tsone} and \eqref{eq:tstwo}, adding a fifth test case $a.a.a$.
This new test case  reveals the
fact that $S$ is not language equivalent to $M$, since $S$ produces $a.a.a/1.0.1$ where $a.a.a/1.0.0$ had been expected 
according to reference model $M$.

In Appendix~\ref{sec:fsmlib}, it is explained how these test suites can be automatically generated using the 
library fsmlib-cpp mentioned in the introduction.
\qed
\end{example}

%==========================================================================================
\section{Evaluation Using a Real-World Example}
\label{sec:rwexample}

\subsection{Fasten Seatbelt and Return-to-Seat Sign Control}

The following experiment is a (slightly simplified) real-world example concerning safety-related and uncritical indications in an aircraft cabin\footnote{The application used in this experiment has been originally published in~\cite[Section~4.1]{Huang2018}. The application description has been reproduced here in its original form, in order to make this article sufficiently self-contained.}.

\begin{table}
\caption{State-transition table of DFSM specifying the control of FSB signs and RTS signs in an aircraft cabin.}
\begin{center}
\begin{tabular}{|l||c|c|c|c|c|c|c|c|c|}\hline\hline
  & {\bf f0} & {\bf f1} & {\bf f2} &{\bf d1} &{\bf d0} & {\bf e1} & {\bf e0} & {\bf a1} & {\bf a0} \\\hline\hline
$\mathbf{s_0}$ & $s_0$/00 & $s_1$/11 & $s_2$/00 & $s_3$/10 & $s_0$/00 & $s_6$/10 & $s_0$/00 & $s_{12}$/00 & $s_0$/00  \\
$\mathbf{s_1}$ & $s_0$/00 & $s_1$/11 & $s_2$/00 & $s_4$/10 & $s_1$/11 & $s_7$/10 & $s_1$/11 & $s_{13}$/11 & $s_1$/11  \\
$\mathbf{s_2}$ & $s_0$/00 & $s_1$/11 & $s_2$/00 & $s_5$/10 & $s_2$/00 & $s_8$/10 & $s_2$/00 & $s_{14}$/11 & $s_2$/00  \\
$\mathbf{s_3}$ & $s_3$/10 & $s_4$/10 & $s_5$/10 & $s_3$/10 & $s_0$/00 & $s_9$/10 & $s_3$/10 & $s_{15}$/10 & $s_3$/10  \\
$\mathbf{s_4}$ & $s_3$/10 & $s_4$/10 & $s_5$/10 & $s_4$/10 & $s_1$/11 & $s_{11}$/10 & $s_4$/10 & $s_{16}$/10 & $s_4$/10  \\
$\mathbf{s_5}$ & $s_3$/10 & $s_4$/10 & $s_5$/10 & $s_5$/10 & $s_2$/00 & $s_{11}$/10 & $s_5$/10 & $s_{17}$/10 & $s_5$/10  \\
$\mathbf{s_6}$ & $s_6$/10 & $s_7$/10 & $s_8$/10 & $s_9$/10 & $s_6$/10 & $s_6$/10 & $s_0$/00 & $s_{18}$/10 & $s_6$/10  \\
$\mathbf{s_7}$ & $s_6$/10 & $s_7$/10 & $s_8$/10 & $s_{10}$/10 & $s_7$/10 & $s_7$/10 & $s_1$/11 & $s_{19}$/10 & $s_7$/10  \\
$\mathbf{s_8}$ & $s_6$/10 & $s_7$/10 & $s_8$/10 & $s_{11}$/10 & $s_8$/10 & $s_8$/10 & $s_2$/00 & $s_{20}$/10 & $s_8$/10  \\
$\mathbf{s_9}$ & $s_9$/10 & $s_{10}$/10 & $s_{11}$/10 & $s_9$/10 & $s_6$/10 & $s_9$/10 & $s_3$/10 & $s_{21}$/10 & $s_9$/10  \\
$\mathbf{s_{10}}$ & $s_9$/10 & $s_{10}$/10 & $s_{11}$/10 & $s_{10}$/10 & $s_7$/10 & $s_{10}$/10 & $s_4$/10 & $s_{22}$/10 & $s_{10}$/10  \\
$\mathbf{s_{11}}$ & $s_9$/10 & $s_{10}$/10 & $s_{11}$/10 & $s_{11}$/10 & $s_8$/10 & $s_{11}$/10 & $s_5$/10 & $s_{23}$/10 & $s_{11}$/10  \\
$\mathbf{s_{12}}$ & $s_{12}$/00 & $s_{13}$/11 & $s_{14}$/11 & $s_{15}$/10 & $s_{12}$/00 & $s_{18}$/10 & $s_{12}$/00 & $s_{12}$/00 & $s_0$/00  \\
$\mathbf{s_{13}}$ & $s_{12}$/00 & $s_{13}$/11 & $s_{14}$/11 & $s_{16}$/10 & $s_{13}$/11 & $s_{19}$/10 & $s_{13}$/11 & $s_{13}$/11 & $s_1$/11  \\
$\mathbf{s_{14}}$ & $s_{12}$/00 & $s_{13}$/11 & $s_{14}$/11 & $s_{17}$/10 & $s_{14}$/11 & $s_{20}$/10 & $s_{14}$/11 & $s_{14}$/11 & $s_2$/00  \\
$\mathbf{s_{15}}$ & $s_{15}$/10 & $s_{16}$/10 & $s_{17}$/10 & $s_{15}$/10 & $s_{12}$/00 & $s_{21}$/10 & $s_{15}$/10 & $s_{15}$/10 & $s_3$/10  \\
$\mathbf{s_{16}}$ & $s_{15}$/10 & $s_{16}$/10 & $s_{17}$/10 & $s_{16}$/10 & $s_{13}$/11 & $s_{22}$/10 & $s_{16}$/10 & $s_{16}$/10 & $s_4$/10  \\
$\mathbf{s_{17}}$ & $s_{15}$/10 & $s_{16}$/10 & $s_{17}$/10 & $s_{17}$/10 & $s_{14}$/11 & $s_{23}$/10 & $s_{17}$/10 & $s_{17}$/10 & $s_5$/10  \\
$\mathbf{s_{18}}$ & $s_{18}$/10 & $s_{19}$/10 & $s_{20}$/10 & $s_{21}$/10 & $s_{18}$/10 & $s_{18}$/10 & $s_{12}$/00 & $s_{18}$/10 & $s_6$/10  \\
$\mathbf{s_{19}}$ & $s_{18}$/10 & $s_{19}$/10 & $s_{20}$/10 & $s_{22}$/10 & $s_{19}$/10 & $s_{19}$/10 & $s_{13}$/11 & $s_{19}$/10 & $s_7$/10  \\
$\mathbf{s_{20}}$ & $s_{18}$/10 & $s_{19}$/10 & $s_{20}$/10 & $s_{23}$/10 & $s_{20}$/10 & $s_{20}$/10 & $s_{14}$/11 & $s_{20}$/10 & $s_8$/10  \\
$\mathbf{s_{21}}$ & $s_{21}$/10 & $s_{22}$/10 & $s_{23}$/10 & $s_{21}$/10 & $s_{18}$/10 & $s_{21}$/10 & $s_{15}$/10 & $s_{21}$/10 & $s_9$/10  \\
$\mathbf{s_{22}}$ & $s_{21}$/10 & $s_{22}$/10 & $s_{23}$/10 & $s_{22}$/10 & $s_{19}$/10 & $s_{22}$/10 & $s_{16}$/10 & $s_{22}$/10 & $s_{10}$/10  \\
$\mathbf{s_{23}}$ & $s_{21}$/10 & $s_{22}$/10 & $s_{23}$/10 & $s_{23}$/10 & $s_{20}$/10 & $s_{23}$/10 & $s_{17}$/10 & $s_{23}$/10 & $s_{11}$/10  \\
\hline\hline
\end{tabular}
\end{center}

\hspace*{12mm}
\begin{minipage}[t]{16cm}
First column defines  the states (initial state $s_0$)
\newline 
First row defines the inputs
\newline
Fields $s/y$ denote `Post-state/Output'
\\

Inputs:
\\
{\bf f0, f1, f2} : FSB switch in position OFF, ON, AUTO 
\\
{\bf d1, d0} : Cabin decompression true, false  
\\
{\bf e1, e0} : Excessive altitude true, false 
\\
{\bf a1, a0} : Auto condition true, false 
\\
Outputs:
\\
00 denotes (FSB,RTS)=(0,0)
\\
11 denotes (FSB,RTS)=(1,1)
\\
10 denotes (FSB,RTS)=(1,0)
\end{minipage}
\label{tab:fsb}
\end{table}%
%\end{landscape}
%\newpage

\begin{table}
\caption{Explanation of states $\mathbf{s_0}$,\dots, $\mathbf{s_{23}}$ in Table~\ref{tab:fsb}.}
\begin{center}
\begin{tabular}{|l||p{15mm}|p{22mm}|p{15mm}|p{20mm}|p{15mm}|}\hline\hline
  & {\bf Cockpit Switch} & {\bf Decompression} & {\bf Excessive\newline Altitude} &{\bf AUTO\newline Condition} & {\bf Current\newline Output} \\\hline\hline
$\mathbf{s_0}$ &0 & 0 & 0 & 0  & 00  \\
$\mathbf{s_1}$ &1 & 0 & 0 & 0  & 11  \\
$\mathbf{s_2}$ &2 & 0 & 0 & 0  & 00  \\
$\mathbf{s_3}$ &0 & 1 & 0 & 0  & 10  \\
$\mathbf{s_4}$ &1 & 1 & 0 & 0  & 10  \\
$\mathbf{s_5}$ &2 & 1 & 0 & 0  & 10  \\
$\mathbf{s_6}$ &0 & 0 & 1 & 0  & 10  \\
$\mathbf{s_7}$ &1 & 0 & 1 & 0  & 10  \\
$\mathbf{s_8}$ &2 & 0 & 1 & 0  & 10  \\
$\mathbf{s_9}$ &0 & 1 & 1 & 0  & 10  \\
$\mathbf{s_{10}}$ &1 & 1 & 1 & 0  & 10  \\
$\mathbf{s_{11}}$ &2 & 1 & 1 & 0  & 10  \\
$\mathbf{s_{12}}$&0 & 0 & 0 & 1  & 00  \\
$\mathbf{s_{13}}$ &1 & 0 & 0 & 1  & 11  \\
$\mathbf{s_{14}}$ &2 & 0 & 0 & 1  & 11  \\
$\mathbf{s_{15}}$ &0 & 1 & 0 & 1  & 10  \\
$\mathbf{s_{16}}$ &1 & 1 & 0 & 1  & 10  \\
$\mathbf{s_{17}}$ &2 & 1 & 0 & 1  & 10  \\
$\mathbf{s_{18}}$ &0 & 0 & 1 & 1  & 10  \\
$\mathbf{s_{19}}$ &1 & 0 & 1 & 1  & 10  \\
$\mathbf{s_{20}}$ &2 & 0 & 1 & 1  & 10  \\
$\mathbf{s_{21}}$ &0 & 1 & 1 & 1  & 10  \\
$\mathbf{s_{22}}$&1 & 1 & 1 & 1  & 10  \\
$\mathbf{s_{23}}$ &2 & 1 & 1 & 1  & 10  \\
\hline\hline
\end{tabular}
\end{center}
\label{tab:fsbstates}
\end{table}%

A \emph{cabin controller} in a
modern aircraft switches the \emph{fasten seat belt (FSB) signs} located 
above the passenger seats in the cabin and the \emph{return to seat (RTS) signs}
located in the lavatories according to the rules modelled in the DFSM
shown in Table~\ref{tab:fsb}.  Note that this DFSM is already minimal.

As inputs, the cabin controller reads the actual position of the fasten seat belts switch in the cockpit, which has the position {\bf f0} (OFF), {\bf f1} (ON), and {\bf f2} (AUTO). Further inputs come from the cabin pressure control system which indicates
``cabin pressure low'' by event {\bf d1} and ``cabin pressure ok'' by {\bf d0}. 
This controller also indicates ``excessive altitude'' by {\bf e1} or
``altitude in admissible range'' by {\bf e0}. Another sub-component of the cabin controller determines whether the so-called AUTO condition is true (event {\bf a1}) or false ({\bf a0}). 

The cabin controller switches the fasten seat belt signs and return to seat signs
on and off, depending on the actual input change and its current internal state.
As long as the cabin pressure and the cruising altitude are ok (after initialisation of the cabin controller or if last events from the cabin pressure controller were  {\bf d0, e0}), the status of the FSB and RTS signs is determined by the cockpit switch and the AUTO condition: if the switch is in the ON position, both FSB and RTS signs are switched on (output 11 in Table~\ref{tab:fsb}). Turning the switch into the OFF position switches the signs off. If the switch is in the AUTO position, both FSB and RTS signs are switched on if the AUTO condition becomes true with event {\bf a1}, and they are switched off 
again after event {\bf a0}. The AUTO condition may depend on the status of landing gears, slats, flaps, and oil pressure, these details are abstracted to {\bf a1, a0} in 
our example.

As soon as    a loss of pressure occurs in the cabin (event {\bf d1}) or an excessive altitude is reached, the FSB signs must be switched on and remain in this state, regardless of the actual state of the cockpit switch and the AUTO condition. The RTS signs, however, need to be switched off, because passengers should not be encouraged to leave the lavatories in a low pressure or excessive altitude situation. 

After the cabin pressure and the altitude are back in the admissible range, the FSB and RTS signs shall automatically resume their state as determined by the ``normal'' inputs from cockpit switch and AUTO condition.

Table~\ref{tab:fsbstates} facilitates the interpretation of the DFSM model shown in Table~\ref{tab:fsb}: for every DFSM 
state $\mathbf{s_0}$,\dots,$\mathbf{s_{23}}$, the associated status of the cockpit switch, cabin decompression, excessive altitude, and AUTO condition, as well
as the last output made when entering the state
is displayed.

% -------------------------------------------------------------------------------------------------
\subsection{Complete Test Suites Checking Model Equivalence Computed by the H-Method}

Applying the H-Method~\cite{DBLP:conf/forte/DorofeevaEY05} as implemented in the fsmlib-cpp\footnote{see Appendix~\ref{sec:fsmlib} for instructions how to use the test case generation program which is part of the library}, the number of test cases needed to test
an implementation to establish language equivalence with full fault coverage is shown in Table~\ref{tab:hversusimplication}, column {\bf H}. 
Recall that a test case is a sequence of inputs. When executing a test case against an implementation, the expected 
results are determined 
As the size of the test suite depends on the potential number $m$ of implementation states minus the number $n$ of states in the minimised model, test  suites are calculated for $m-n = 0,1,2$.

\begin{table}[htp]
\caption{Comparison of the numbers of test cases needed to prove I/O-equivalence (H-Method) and to prove requirements satisfaction using exhaustive   testing for requirements $\mathbf{R}_1$ and $\mathbf{R}_2$ specified in this section.}
\begin{center}
\begin{tabular}{|r||r|r||r|r|r||r|r|r||}\hline\hline
$\mathbf{m-n}$ & {\bf H} &  $t_H[s]$   &$\mathbf{R}_1$ & $\mathbf{\Delta_1^\%}$ & $t_1[s]$ & 
$\mathbf{R}_2$  &  $\mathbf{\Delta_2^\%}$ &  $t_2[s]$
\\\hline\hline
 0 & 518 & 1.3 & 193 & 63 & 0.02 & 337 & 35 & 0.03
 \\\hline
 1 & 4069 & 76.7 & 1737 & 57 & 0.09 & 3035 & 25 & 0.2
 \\\hline
 2 & 35325 & 7765.6& 15633 & 56 & 0.9 & 27327 & 23 & 2.3
\\\hline\hline
\end{tabular}
\end{center}
\label{tab:hversusimplication}
Column $\mathbf{m-n}$ contains the maximal difference between the number of SUT states and model states assumed for the test suite generation. 
Column {\bf H} contains the number of test cases required
when using the H-Method for I/O-equivalence testing. Columns $\mathbf{R}_i,\ i=1,2$ 
contain the number
of test cases required for requirements-driven testing of $\mathbf{R}_i$.
Columns $\mathbf{\Delta}_i^\%,\ i=1,2$ contain 
the test case reductions achieved by requirements-driven testing in percent, calculated 
according to formulas $\mathbf{\Delta}_i^\% = 100 - 100*\frac{R_i}{H}$, where $R_i$ denotes the number 
of test cases from column $\mathbf{R}_i$  and $H$ the number from column {\bf H}. Time $t_H$ is the test suite generation time in seconds needed to create the respective language equivalence test suite using the H-Method. Durations $t_i,\ i=1,2$
indicate the time needed to generated the exhaustive requirements test suites for $\mathbf{R}_1$ and
$\mathbf{R}_2$, respectively. The time has been measured on an Apple iMac with 4.2~GHz Intel Core~i7 CPU and 64GB 2400 MHz DDR4 memory.
\end{table}%

% -------------------------------------------------------------------------------------------------
\subsection{Requirement $\mathbf{R}_1$: Safety-relevant outputs on Decompression}

As a first requirement, we consider
\begin{quote}
$\mathbf{R}_1$. \it Whenever cabin decompression occurs, the FSB signs shall be set to 1, and the RTS signs to 0.
\end{quote}
In the DFSM model shown in Table~\ref{tab:fsb}, this requirement is reflected by column $\mathbf{d1}$: regardless of 
the current state, output (FSB,RTS)=(1,0) is assigned on occurrence of input $\mathbf{d1}$. Encoding $\mathbf{R}_1$
in our requirements specification formalism results in the following representation.
$$
\mathbf{R}_1 \equiv \bigwedge_{i=0}^{23} R(s_i,\mathbf{d1},\{ 10 \})
$$
Obviously, the specification of $\mathbf{R}_1$ is independent of the current state, and  there are no alternative outputs that may be accepted in exchange for the output  (FSB,RTS)=(1,0) expected according to the reference model.

Creating an exhaustive test suite for $\mathbf{R}_1$ according to Theorem~\ref{th:se} results in the number
of test cases shown in column  $\mathbf{R}_1$ of Table~\ref{tab:hversusimplication}. As listed in column 
$\mathbf{\Delta_1^\%}$ of this table, the test case reductions achieved in comparison to the H-based test suite   vary with $m-n$ and are in range $56\%$ --- $63\%$.

The test of this requirement could be advisable, for example, during regression testing after the sign controller's 
code  has been 
modified with regard to the sub-function responsible for decompression handling. Instead of running the complete H-test suite, it would suffice to perform the significantly smaller test suite for  $\mathbf{R}_1$ which also guarantees 
full fault coverage as far as  $\mathbf{R}_1$ is concerned.

% -------------------------------------------------------------------------------------------------
\subsection{Requirement $\mathbf{R}_2$: All Safety-relevant outputs and Sign Activation by Manual Switch}

As second requirement, we consider a composite requirement related to all safety-critical events in combination 
to normal behaviour reactions (FSB,RTS)=(1,1) when setting the cockpit switch to position 1.
\begin{quote}
$\mathbf{R}_2$. \it Whenever cabin decompression or excessive altitude occurs, the FSB signs shall be set to 1, and the RTS signs to 0. The signs stay activated until both decompression and execessive altitude are no longer present. In absence of 
cabin decompression and excessive altitude, both FSB and RTS signs shall be switched on when setting the cockpit switch to position 1.
\end{quote}
Formalising this requirement leads to
\begin{eqnarray*}
\mathbf{R}_2 & \equiv & \bigwedge_{i=0}^{23} \big( R(s_i,\mathbf{d1},\{ 10 \}) 
                                                 \wedge  R(s_i,\mathbf{e1},\{ 10 \}) \big) \wedge {}
\\
& & \bigwedge_{i\in\{3,\dots,11,15,\dots,23  \}} 
\big( R(s_i,\mathbf{f0},\{ 10 \}) \wedge R(s_i,\mathbf{f1},\{ 10 \}) \wedge  R(s_i,\mathbf{f2},\{ 10 \})  \big) \wedge {}
\\
& & \bigwedge_{i\in\{3,\dots,11,15,\dots,23  \}} 
\big( R(s_i,\mathbf{a1},\{ 10 \}) \wedge R(s_i,\mathbf{a0},\{ 10 \}) \big) \wedge {}
\\
& & \bigwedge_{i\in\{6,\dots,11,18,\dots,23  \}} R(s_i,\mathbf{d0},\{ 10 \}) \wedge {}
\\
& & \bigwedge_{i\in\{ 3,4,5,9,10,11,15,16,17,21,22,23  \}} R(s_i,\mathbf{e0},\{ 10 \})
\\
& & \bigwedge_{i\in\{0,1,2,12,13,14  \}} R(s_i,\mathbf{f1},\{ 11 \})
\end{eqnarray*}

Creating an  exhaustive test suite for $\mathbf{R}_2$ according to Theorem~\ref{th:se} results in the number
of test cases shown in column  $\mathbf{R}_2$ of Table~\ref{tab:hversusimplication}. As listed in column 
$\mathbf{\Delta_2^\%}$ of this table, the test case reductions achieved in comparison to the H-based test suite    are in range $23\%$ --- $35\%$.

Testing this requirement would be advisable, for example, after the initial development of the sign controller, assuming that there would not be enough time to perform all tests derived by the H-method. Requirement $\mathbf{R}_2$ leads 
to a smaller test suite, but still covers all safety-relevant reactions of the implementation and the most important
user requirement. Then it could be justified to test the   functionality related to the AUTO condition with less effort, since, as a fall back option for the pilot, signs could always be switched on manually when the AUTO mode is not properly functioning.

%==========================================================================================
\section{Complete Testing of Composite Requirements}
\label{sec:maintheorem}

Suppose that DFSMs $M$ (reference model) and $S$ (implementation) 
are completely specified and consider the composite requirement  $R=\bigwedge_{i=1}^k R(q_i,x_i,Z_i)$   specified on $M$.
In this section, we are going to construct finite test suites $\TS_\Leftrightarrow$ depending on $M$ and $R$ and a
new pass relation $\pass_\Leftrightarrow$ such that an implementation $S$  passes 
$\TS_\Leftrightarrow$ {\it if and only if} it conforms to the requirements $R$. In the terminology
introduced in Section~\ref{sec:maincontrib}, these  test suites are called complete.

% ------------------------------------------------------------------------------------------
\subsection{A Nondeterministic Alternative to $M_1$}\label{sec:m1prime}

%.............................................................................................
Let $M_1$ be the FSM abstraction induced by $M$ and $R$ as described in Section~\ref{sec:dfsmabs}. 
Recall that $M_1$ is deterministic with output alphabet $\{*, Z_1,\dots, Z_k\}$, where
$*$ is short for $\Sigma_O$. 
The DFSM $M_1$   induces an alternative abstraction $M_1'$  of $M$, which is
{\it nondeterministic} and has   output alphabet $\Sigma_O$. This nondeterministic FSM is specified by 
$M_1'=(Q, \ul q, \Sigma_I, \Sigma_O, h_1')$, where 
$$
(q,x,y,q')\in h_1'\Leftrightarrow \big(\delta_{M_1}(q,x) = q' \wedge y \in \omega_{M_1}(q,x)\big).
$$
Given $(q,x)\in \Sigma_I\times \Sigma_I$ such that $\omega_{M_1}(q,x) = *$,
the nondeterministic machine $M_1'$  possesses the transitions $(q,x,y,\delta_{M_1}(q,x))$ with arbitrary
$y\in\Sigma_O$.
For $i\in\{1,\dots,k \}$, recall that $\omega_{M_1}(q_i,x_i) = Z_i$. For these cases, 
$M_1'$  reacts by nondeterministically taking one of the
transitions $(q_i,x_i,y,\delta_{M_1}(q,x))$ with    $y\in Z_i$.

% ..........................................................................................
\begin{example}\label{ex:fsm_abs_nondet}
When representing the abstraction $M_1$ from Fig.~\ref{fig:abstractedM1} as a nondeterministic
FSM, this results in $M_1'$ as depicted in Fig.~\ref{fig:fsm_abs_nondet}.
\qed
\end{example}
% ..........................................................................................

\begin{figure}[h]
\begin{center}
 \includegraphics[width=0.7\textwidth]{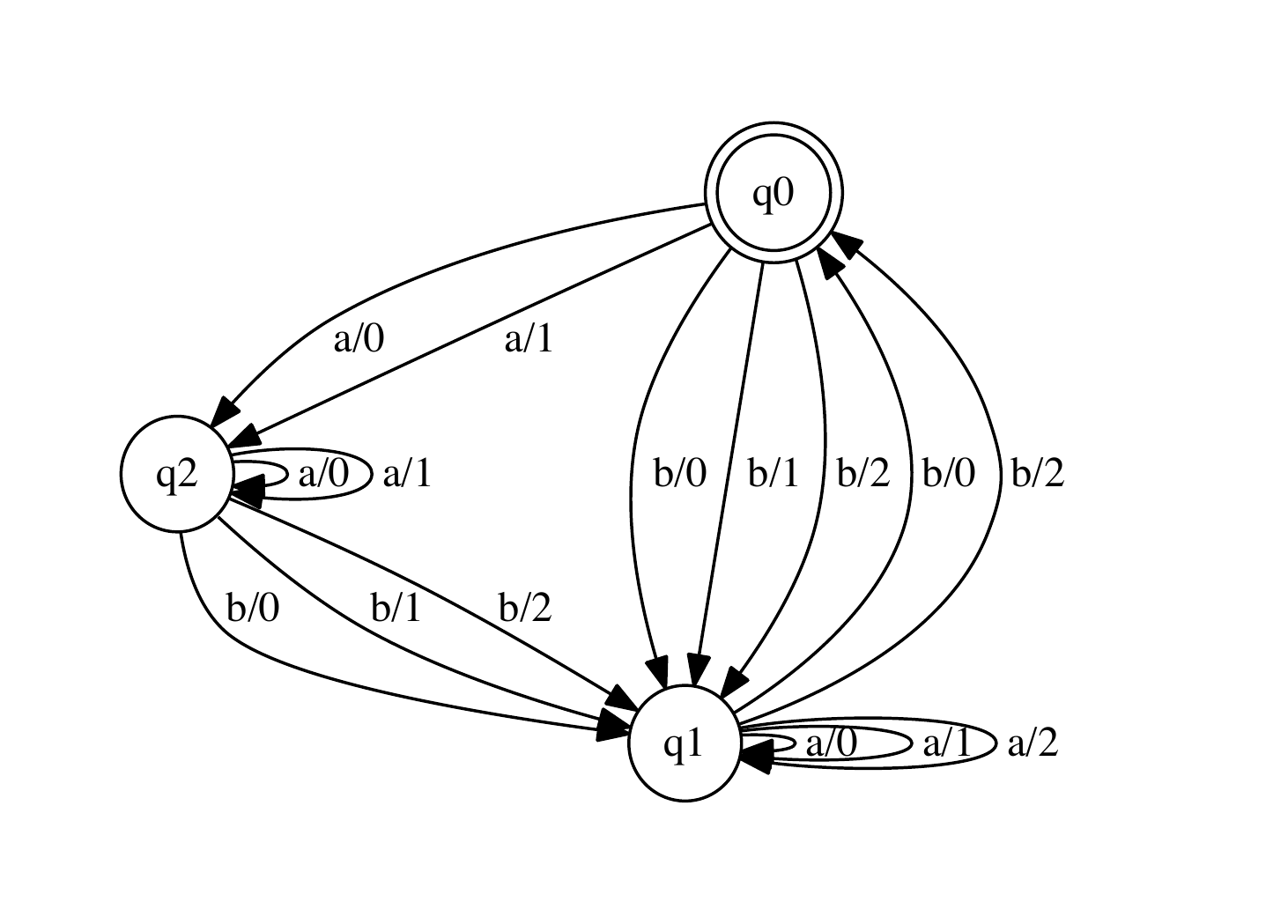}
\caption{Nondeterministic FSM $M_1'$ created from $M_1$ displayed in Fig.~\ref{fig:abstractedM1}.}
\label{fig:fsm_abs_nondet}
\end{center}
\end{figure}

From the construction rules specified above and from the illustration in Example~\ref{ex:fsm_abs_nondet} and Fig.~\ref{fig:fsm_abs_nondet}, it is immediately clear 
that  $M_1'$ is observable and completely specified. Also note that by construction, 
$M_1'$ may be interpreted as the {\it ``most nondeterministic FSM which still satisfies requirement $R$''}: for (state,input)-pairs that are unrelated to $R$, any output from $\Sigma_O$ can occur. For   pairs $(q_i,x_i)$ related to elementary requirements $R(q_i,x_i,Z_i)$, any
output from $Z_i$ can be nondeterministically selected, so that $R(q_i,x_i,Z_i)$ is never
violated.

Moreover, $M_1'$ is only nondeterministic with respect to 
the outputs produced in a given state $q$ for a 
given input $x$, whereas the target state is uniquely determined by $q$ and $x$. This means that 
the transition functions $\delta_{M_1}, \delta_{M_1'}: Q\times \Sigma_I \fun Q$ of $M_1$ and 
$M_1'$, respectively, coincide in the sense that 
$$
\forall (q,x)\in Q\times \Sigma_I: \delta_{M_1}(q,x) = \delta_{M_1'}(q,x).
$$
As a consequence, the expressions $\ii q\after \ol x,\ \ol x\in \Sigma_I^*,$ always 
result in the same uniquely determined 
target state, regardless of whether they are evaluated in $M_1$ or $M_1'$.\footnote{Recall that for general nondeterministic FSMs, $\ii q\after \ol x$ specifies a {\it set} of possible target states, since their transition 
relation $h$ may allow for different target states being reached for a given pre-state
and input.}

Finally, note that, since $M_1'$ is nondeterministic, its output function  is set-valued, 
$\omega_{M_1'}  :  Q\times \Sigma_I^* \fun \mathbb{P}(\Sigma_O^*)$. It is easy to see, however, that $\omega_{M_1'}(q,x) = \omega_{M_1}(q,x)$ for all states $q$ and inputs $x$:
FSM $M_1'$ has transitions
$q \xrightarrow{x/y} \delta_{M_1}(q,x)$ for every $y\in\omega_{M_1}(q,x)$.

The following theorem presents an important insight into the relationship between requirements satisfaction and reduction (i.e.~language inclusion): an implementation machine $S$ satisfies the requirement $R$, if and only if its language is contained in the language of the abstraction $M_1'$ constructed above. 
\begin{theorem}\label{th:reqreduct}
$S\models R\Leftrightarrow L(S)\subseteq L(M_1')$
\end{theorem}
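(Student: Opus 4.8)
The plan is to prove both implications of $S\models R\Leftrightarrow L(S)\subseteq L(M_1')$ directly, exploiting the explicit description of $M_1'$ as the ``most nondeterministic FSM satisfying $R$'' together with the observation that the transition function $\delta_{M_1'}=\delta_{M_1}=\delta_M$ is actually deterministic, so that for any input trace $\ol x$ there is a unique $M_1'$-state $\ii q\after\ol x$ and a well-defined set $\omega_{M_1'}(\ii q,\ol x)$ of admissible output traces.

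For the direction $L(S)\subseteq L(M_1')\Rightarrow S\models R$, I would fix an elementary requirement $R(q_i,x_i,Z_i)$ and an arbitrary $\pi\in\Pi(q_i)$, i.e.\ $\ii q\after\pi=q_i$. Consider the $S$-trace $\tau=\pi.x_i/\,\omega_S(\ii s,\pi.x_i)\in L(S)$; by hypothesis $\tau\in L(M_1')$. Since $\delta_{M_1'}$ is deterministic, the unique $M_1'$-state reached after reading $\pi$ is $\ii q\after\pi=q_i$, and the last output symbol of $\tau$, namely $\omega_S(\ii s\after\pi,x_i)$, must lie in $\omega_{M_1'}(q_i,x_i)=\omega_{M_1}(q_i,x_i)=Z_i$. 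As $\pi\in\Pi(q_i)$ was arbitrary, $S\models R(q_i,x_i,Z_i)$; ranging over $i$ gives $S\models R$.

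For the converse $S\models R\Rightarrow L(S)\subseteq L(M_1')$, I would take an arbitrary trace $\tau=x_1/y_1\dots x_\ell/y_\ell\in L(S)$ and show $\tau\in L(M_1')$ by inducting on the length $\ell$, or equivalently by verifying that every one-step extension stays inside $\omega_{M_1'}$. Write $\pi_j=x_1\dots x_{j-1}$ for the prefix of length $j-1$ and let $p_j=\ii q\after\pi_j$ be the (unique) state reached in $M$ (hence in $M_1,M_1'$). For each position $j$ there are two cases. If $(p_j,x_j)\notin\{(q_1,x_1),\dots,(q_k,x_k)\}$, then $\omega_{M_1'}(p_j,x_j)=*=\Sigma_O$, so $y_j$ is trivially admissible. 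If $(p_j,x_j)=(q_i,x_i)$ for some $i$, then $\pi_j\in\Pi(q_i)$, and since $S\models R(q_i,x_i,Z_i)$ we get $y_j=\omega_S(\ii s\after\pi_j,x_j)\in Z_i=\omega_{M_1'}(p_j,x_j)$. In either case the transition $(p_j,x_j,y_j,p_{j+1})$ belongs to $h_1'$, so the whole run witnesses $\tau\in L(M_1')$.

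The main obstacle, and the place requiring care rather than ingenuity, is the bookkeeping that identifies the ``$M_1'$-state reached after a prefix'' with the ``$M$-state reached after that prefix'': this relies on the already-noted fact that $M_1'$ is nondeterministic only in its outputs and that $\delta_{M_1'}=\delta_{M_1}=\delta_M$, so membership in $\Pi(q_i)$ computed in $M$ is exactly what pins down the relevant $M_1'$-state. Once that identification is in hand, both directions reduce to unwinding the definition of $h_1'$ and of $S\models R(q_i,x_i,Z_i)$, so no deeper argument (and in particular no appeal to a distinguishing-sequence or state-counting construction) is needed here.
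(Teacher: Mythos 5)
Your proof is correct and follows essentially the same route as the paper's: both directions reduce to the observation that $\delta_{M_1'}$ is deterministic, so the $M_1'$-state reached after an input prefix is the corresponding $M$-state, and membership of a trace in $L(M_1')$ amounts to checking each output symbol against $\omega_{M_1}$, which is only restrictive at the requirement pairs $(q_i,x_i)$. The only cosmetic difference is that you prove $S\models R\Rightarrow L(S)\subseteq L(M_1')$ directly by induction over trace positions, while the paper argues the contrapositive via a first point of divergence; the content is the same.
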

\begin{proof}
Recall that $\Pi(q)=\{\ol x\in \Sigma_I^*~|~\ul q\after \ol x=q\}$. 
Also recall that by definition, $S\models R\Leftrightarrow \forall i\in\{1,\dots,k\},  \pi\in \Pi(q_i): \omega_S(\ii s\after \pi,x_i)\in Z_i$.

Suppose that $L(S)\subseteq L(M_1')$. Since $S$ is completely specified by assumption, 
there exists a unique 
I/O-trace $\ol x.x_i/\ol y.y \in L(S)$ for any input sequence $\ol x.x_i$
with $\ol x \in \Pi(q_i)$ and $i\in\{1,\dots,k\}$. Since we assume that $S$ is a reduction
of $M_1'$, this I/O-trace must also be a trace of $M_1'$. By construction of $M_1'$, this means 
that $y\in Z_i$, so $S$ fulfils requirement $R(q_i,x_i,Z_i)$. Since this argument was independent
of $i\in\{1,\dots,k\}$, $S\models R$ follows.  

Now suppose that $L(S)\not\subseteq L(M_1')$. Then there exists an I/O-trace 
$\ol x.x/\ol y.y \in L(S)$ such that $\ol x/\ol y \in L(M_1')$, but 
$\ol x.x/\ol y.y \not\in L(M_1')$. Suppose that $(\ii q\after \ol x,x) \neq (q_i,x_i)$ 
for all
$i\in\{1,\dots, k\}$, where the $\after$ operator is evaluated  in $M_1'$. Then, 
by construction of $M_1'$, $\ol x.x/\ol y.y \in L(M_1')$ for {\it all}
$y\in\Sigma_O$, so this is a contradiction to the assumption $\ol x.x/\ol y.y \not\in L(M_1')$.

Thus the assumption
$\ol x.x/\ol y.y \not\in L(M_1')$ implies the existence of an $i\in\{1,\dots, k\}$ such that 
$(\ii q\after \ol x,x) = (q_i,x_i)$. As a consequence,  
$\ol x.x_i/\ol y.y \not\in L(M_1')$ and $\ol x\in\Pi(q_i)$. By construction of $M_1'$, this means that $y\not\in Z_i$.
This implies that $S$ violates requirement $R(q_i,x_i,Z_i)$, so $S\not\models R$, and this 
completes the proof.
\qed
\end{proof}

% -----------------------------------------------------------------------------------------
\subsection{The pass criterion $\pass_\Leftrightarrow$}
Let $\ol x\in \Sigma_I^*$ and $\TS_\Leftrightarrow\subseteq \Sigma_i^*$. We define a new 
pass criterion for test cases by 
$$
S\ \pass_\Leftrightarrow\  \ol x \equiv \big( \omega_S(\ii s, \ol x)\in \omega_{M_1'}(\ul q, \ol x)\big),
$$
and extend this to test suites by
$$
S\ \pass_\Leftrightarrow\ \TS_\Leftrightarrow \equiv \big(\forall \ol x\in \TS_\Leftrightarrow: S\ \pass_\Leftrightarrow\  \ol x\big).
$$
Intuitively speaking, a test case represented by some input trace $\ol x$ is passed 
by the implementation $S$ if and only if the resulting I/O-trace $\ol x/\omega_S(\ii s, \ol x)$ 
performed by $S$ is contained in the language of $M_1'$.  Using the respective output functions, this is expressed here by stating that the output trace generated by $S$ on input trace $\ol x$
in an element of the set of output traces possible in $M_1'$ for this test case.
Therefore, $\pass_\Leftrightarrow$ is just the well-known 
pass criterion for reduction testing of $S$ against
$M_1'$, that is, for checking whether $L(S)\subseteq L(M_1')$ holds.

% ..........................................................................................

\begin{figure}[h]
\begin{center}
 \includegraphics[width=0.4\textwidth]{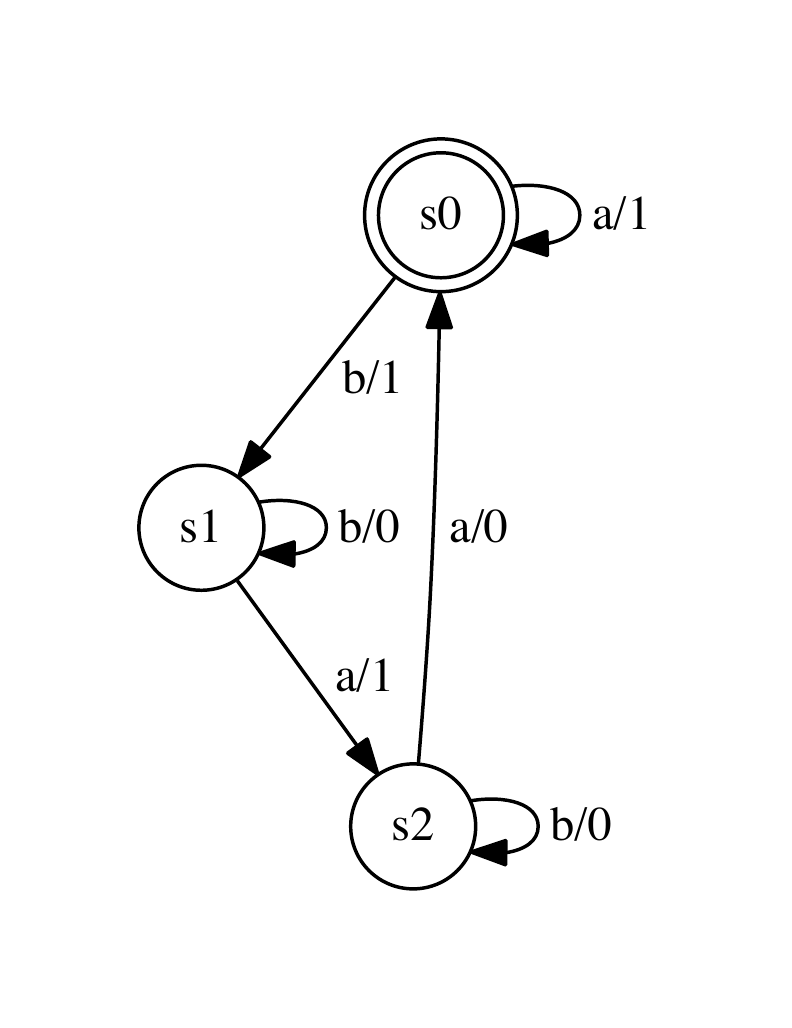}
\caption{Implementation $S$ from Example~\ref{ex:cannotuseexhaustive}.}
\label{fig:cannotuseexhaustive}
\end{center}
\end{figure}
%..........................................................................................
\begin{example}\label{ex:cannotuseexhaustive}
At first glance, one might ask whether it is possible to apply the test suites specified 
in Theorem~\ref{th:se} just with the new pass criterion $\pass_\Leftrightarrow$, 
in order to create complete test suites for the given requirement. This, however, is not true: the following example shows that the suites from Theorem~\ref{th:se} are no longer exhaustive, when applied with $\pass_\Leftrightarrow$.

Consider again reference model $M$ and requirement 
\[ 
 R \equiv R(q_0,a,\{0,1\}) \wedge R(q_1,b,\{ 0,2\}) \wedge R(q_2,a,\{0,1\}).
 \]
 from Example~\ref{ex:runninga}. From Example~\ref{ex:exhaustive}, we know that 
 \[
a.a.b, a.b.b, b.a.b, b.b.a
\]
 is an exhaustive  test suite for pass criterion $\pass_\Rightarrow$ 
 generated according to rules \eqref{eq:tsone} and \eqref{eq:tstwo} from
 Theorem~\ref{th:se}.
Now consider another implementation $S$ as shown in Fig.~\ref{fig:cannotuseexhaustive}.
Applying the four test cases above to $S$ results in I/O-traces
\[
a.a.b/1.1.1,\ a.b.b/1.1.0,\ b.a.b/1.1.0,\ b.b.a/1.0.1
\]
It is easy to see that $S$ passes the four test cases when applying pass criterion 
$\pass_\Leftrightarrow$ (just check the observed outputs against $M_1$ from Fig.~\ref{fig:abstractedM1}). However, $S$ does not satisfy 
requirement $R(q_1,b,\{ 0,2\})$, and, equivalently, $S$ is not a reduction of $M_1'$
shown in Fig.~\ref{fig:fsm_abs_nondet}: input trace $b.a.a.b$ applied to $S$ results in 
\[
   b.a.a.b/1.1.0.1,
\]
but $\ii q\after b.a.a = q_1$ in  $M$, and $R(q_1,b,\{ 0,2\})$ only allows $0$ or $2$
as output when $b$ is applied in state $q_1$. In contrast to that, $S$ outputs 1 when given
input $b$ in state $s_0\after b.a.a = s_0$. Expressed in an equivalent way, 
$b.a.a.b/1.1.0.1\not\in L(M_1')$, for $M_1'$ shown in Fig.~\ref{fig:fsm_abs_nondet}. 
\xbox
\end{example}

% -----------------------------------------------------------------------------------------
\subsection{Main Theorem on Complete Test Suites}

The following theorem shows that any complete test suite for reduction testing against
$M_1'$ can be reduced in a specific way that still guarantees requirements satisfaction
if and only if the resulting suite is passed.

\begin{theorem}\label{th:sre}
Let $\TS$ be any complete reduction test suite guaranteeing $L(S)\subseteq L(M_1')$ if and only
if $S\ \pass_\Leftrightarrow \TS$ for all $S \in{\cal D}$.  Define 
$$
\ol \Pi=\bigcup_{i=1}^{k}\Pi(q_i).\{x_i\}.
$$
Then any test suite $\TS_\Leftrightarrow$ satisfying
$$
 \text{pref}(TS) \cap \overline \Pi \subseteq \TS_\Leftrightarrow 
$$  
is complete for testing $R$, that is, 
$$
S\ \pass_\Leftrightarrow\  \TS_\Leftrightarrow       \Leftrightarrow S\models R
$$
holds for all $S\in {\cal D}$, that is, $\TS_\Leftrightarrow$ is a complete   suite for
testing composite requirement $R$.
\end{theorem}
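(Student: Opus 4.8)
The plan is to prove the two implications of the biconditional separately, using Theorem~\ref{th:reqreduct} to reduce requirements satisfaction to the reduction relation $L(S)\subseteq L(M_1')$, and then comparing the pass behaviour of $S$ on $\TS_\Leftrightarrow$ with its pass behaviour on the full complete reduction suite $\TS$.

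\medskip\noindent\textbf{Direction $S\models R \Rightarrow S\ \pass_\Leftrightarrow\ \TS_\Leftrightarrow$.} This is the easy direction. Suppose $S\models R$. By Theorem~\ref{th:reqreduct}, $L(S)\subseteq L(M_1')$. Since $\TS$ is a complete reduction test suite and $S\in{\cal D}$, completeness (in particular soundness) gives $S\ \pass_\Leftrightarrow\ \TS$. Now $\TS_\Leftrightarrow$ is a subset of $\text{pref}(\TS)$, and the pass relation $\pass_\Leftrightarrow$ is prefix-closed: if $\omega_S(\ii s,\ol x)\in\omega_{M_1'}(\ul q,\ol x)$ then the same holds for every prefix of $\ol x$, because both $S$ and $M_1'$ have deterministic transition structure (only the outputs of $M_1'$ are set-valued) so outputs of prefixes are the corresponding prefixes of the output traces, and $M_1'$ being observable and completely specified means a prefix of an admissible output trace is itself admissible. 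Hence $S\ \pass_\Leftrightarrow\ \ol x$ for all $\ol x\in\text{pref}(\TS)\supseteq\TS_\Leftrightarrow$, so $S\ \pass_\Leftrightarrow\ \TS_\Leftrightarrow$.

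\medskip\noindent\textbf{Direction $S\ \pass_\Leftrightarrow\ \TS_\Leftrightarrow \Rightarrow S\models R$.} This is the substantive direction. I argue by contraposition: assume $S\not\models R$ and show $S$ fails some test case in $\TS_\Leftrightarrow$. By Theorem~\ref{th:reqreduct}, $L(S)\not\subseteq L(M_1')$, so $S$ fails the complete reduction suite $\TS$: there is some $\ol w\in\TS$ with $\omega_S(\ii s,\ol w)\notin\omega_{M_1'}(\ul q,\ol w)$, i.e.\ the I/O-trace $\ol w/\omega_S(\ii s,\ol w)$ is not in $L(M_1')$. Following the pattern of the proof of Theorem~\ref{th:reqreduct}, I take $\ol w$ so that the shortest ``bad'' prefix is isolated: let $\ol x.x$ be the shortest prefix of $\ol w$ whose $S$-output already leaves $L(M_1')$ while $\ol x$'s output is still in $L(M_1')$. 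Then, exactly as in the second half of the proof of Theorem~\ref{th:reqreduct}, the only way a single further input $x$ applied after $\ol x$ can escape $L(M_1')$ is that $(\ii q\after\ol x,x)=(q_i,x_i)$ for some $i\in\{1,\dots,k\}$ (otherwise any output is admissible in $M_1'$ and the prefix would not be bad). Consequently $\ol x\in\Pi(q_i)$ and $\ol x.x=\ol x.x_i\in\Pi(q_i).\{x_i\}\subseteq\ol\Pi$. Moreover $\ol x.x$ is a prefix of $\ol w\in\TS$, so $\ol x.x\in\text{pref}(\TS)\cap\ol\Pi\subseteq\TS_\Leftrightarrow$. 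Finally, since the bad prefix $\ol x.x$ is itself already not accepted by $M_1'$, we have $\omega_S(\ii s,\ol x.x)\notin\omega_{M_1'}(\ul q,\ol x.x)$, so $S$ fails the test case $\ol x.x\in\TS_\Leftrightarrow$, i.e.\ $S\ \not\pass_\Leftrightarrow\ \TS_\Leftrightarrow$. This contradicts the hypothesis, completing the contrapositive.

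\medskip\noindent\textbf{Main obstacle.} The delicate point is the second direction: one must be sure that when $S$ violates $R$, the violation is \emph{witnessed by a prefix of a test case already present in the full reduction suite $\TS$} and that this witnessing prefix lies in $\ol\Pi$. The completeness of $\TS$ only tells us $S$ fails \emph{some} $\ol w\in\TS$; the work is in the ``shortest bad prefix'' argument, which uses the special structure of $M_1'$ (deterministic transitions, outputs $*=\Sigma_O$ on all transitions unrelated to $R$, so that the very step at which $S$'s trace leaves $L(M_1')$ must be a requirement-relevant step $(q_i,x_i)$) to locate the witness precisely inside $\text{pref}(\TS)\cap\ol\Pi$. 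Once this localisation is done, membership in $\TS_\Leftrightarrow$ and failure of that one test case are immediate. I would also take care, in the first direction, to state explicitly the prefix-closure of $\pass_\Leftrightarrow$, since $\TS_\Leftrightarrow$ generally contains proper prefixes of test cases in $\TS$ rather than elements of $\TS$ itself.
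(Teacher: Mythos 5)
Your argument for the substantive direction is sound and rests on exactly the same structural fact the paper's proof uses: because $M_1'$ outputs $*=\Sigma_O$ on every transition unrelated to $R$, the first step at which an I/O-trace of $S$ leaves $L(M_1')$ must occur at a requirement-related pair $(q_i,x_i)$, which places the witnessing prefix inside $\text{pref}(\TS)\cap\ol\Pi$. The paper merely packages this differently: it proves a per-test-case chain of equivalences $S\ \pass_\Leftrightarrow\ \ol x \Leftrightarrow S\ \pass_\Leftrightarrow\ \text{pref}(\ol x)\cap\ol\Pi$ for each $\ol x\in\TS$ (the non-requirement positions impose vacuous conditions), from which both directions follow at once, whereas you split the biconditional and run a contraposition with a ``shortest bad prefix'' localisation. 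The two organisations are interchangeable.

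The one point you should repair is in the easy direction: you assert that $\TS_\Leftrightarrow$ is a subset of $\text{pref}(\TS)$, but the theorem's hypothesis is the reverse inclusion $\text{pref}(\TS)\cap\ol\Pi\subseteq\TS_\Leftrightarrow$, so $\TS_\Leftrightarrow$ may contain input traces lying outside $\text{pref}(\TS)$ altogether, and your prefix-closure argument does not reach those. The fix is immediate and makes the detour through soundness of $\TS$ unnecessary: if $L(S)\subseteq L(M_1')$, then for \emph{every} input trace $\ol x$ the I/O-trace $\ol x/\omega_S(\ii s,\ol x)$ lies in $L(S)\subseteq L(M_1')$, i.e.\ $S\ \pass_\Leftrightarrow\ \ol x$, so a reduction passes any test suite whatsoever under $\pass_\Leftrightarrow$. (Your prefix-closure observation about $\pass_\Leftrightarrow$ is correct; it is just not needed once this is noted.)
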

\begin{proof}
Let $\ol x\in \Sigma_I^*$ be any nonempty input sequence and   $\ol y = \omega_S(\ii s, \ol x)$.
Let $\ell = |\ol x|=|\ol y|$.
We prove the following derivation for an arbitrary test case $\ol x$ of the complete
reduction test suite $\TS$ and its pass criterion $\pass_\Leftrightarrow$.
\begin{eqnarray}
 & & S\ \pass_\Leftrightarrow\  \ol x 
 \label{eq:z}
 \\
 & \Leftrightarrow & \forall j\in \{1,\dots,\ell\}:\ol y(j)\in \omega_{M_1'}(\ul q\after \ol x^{[1..j-1]}, \ol x(j))
 \label{eq:a}
\\
& \Leftrightarrow & \forall j \in \{1,\dots,\ell\}, \big(\ul q\after \ol x^{[1..j-1]},\ol x(j)\big)\in \{(q_i,x_i)~|~i=1,\dots,k\}: \nonumber
\\ & &  
\ol y(j)\in \omega_{M_1'}(\ul q\after \ol x^{[1..j-1]}, \ol x(j))
 \label{eq:b}
\\
& \Leftrightarrow & S\ \pass_\Leftrightarrow\  \text{pref}(\ol x)\cap \ol \Pi
\label{eq:c}
\end{eqnarray}
The equivalence \eqref{eq:z} $\Leftrightarrow$ \eqref{eq:a}   follows from re-writing the original definition of $\pass_\Leftrightarrow$ with explicit indexes for input and output events. 
Implication \eqref{eq:a} $\Rightarrow$ \eqref{eq:b} is trivial, since we restrict the 
$\ol y(j)$ under consideration to those that are outputs obtained when applying 
requirements-related inputs $x_i$ in state $q_i$. Conversely, \eqref{eq:a} $\Leftarrow$ \eqref{eq:b} follows from the fact that for 
$\big(\ul q\after \ol x^{[1..j-1]},\ol x(j)\big)\not\in \{(q_i,x_i)~|~i=1,\dots,k\}$, 
we have $\omega_{M_1'}(\ul q\after \ol x^{[1..j-1]}, \ol x(j)) = \Sigma_O$, so any output is acceptable. Finally, equivalence \eqref{eq:b} $\Leftrightarrow$ \eqref{eq:c} follows
from the fact that  \eqref{eq:b} is just the definition of  
$S\ \pass_\Leftrightarrow\  \text{pref}(\ol x)\cap \ol \Pi$ with explicit indexes.

With the  derivation above, we have shown that in order to prove $L(S) \subseteq L(M_1')$, it suffices
to check just the test cases   $\text{pref}(\ol x)\cap \ol \Pi$ for all $\ol x$ in the 
original complete test suite $\TS$. This proves 
$$
L(S) \subseteq L(M_1') \Leftrightarrow S\ \pass_\Leftrightarrow\ \TS_\Leftrightarrow.
$$
Now Theorem~\ref{th:reqreduct} can be applied to conclude that 
$S\models R\Leftrightarrow S\ \pass_\Leftrightarrow\ \TS_\Leftrightarrow$, and this completes the proof.
\qed
\end{proof}
%-----------------------------------------------------------------------------------

\begin{example}\label{ex:reduction}
Applying the adaptive state counting algorithm from~\cite{hierons_testing_2004} to
reference model $M_1'$ from Example~\ref{ex:fsm_abs_nondet} and implementation model $S$ from Example~\ref{ex:runninga}, a test suite
$\TS$ with 39 test cases of maximal length 
 is obtained to test for reduction of $S$ against $M_1'$ with $n = 2$
for the number of states in the minimised observable FSM associated with $M_1'$, and 
$m = 3$ for the number of states in the minimised version of $S$ (which is identical to $S$).

Restricting this complete reduction test suite to the test cases that are also contained in
$\text{pref}(TS) \cap \ol \Pi$, where $\ol\Pi$ is specified according to Theorem~\ref{th:sre} 
for reference model $M$ and
requirement $R$ specified in Example~\ref{ex:runninga}, results in 16 test cases
\[
\begingroup\setlength{\arraycolsep}{10pt}
\begin{array}{llll}
a.a.a  & a.b.b.b.b  & b.a.b.b.b.a & b.b.b.a.b.a \\
a.a.b.a.b  & b.a.a.b.a.a  &  b.b.a.a &  b.b.b.b.a\\
a.b.a.b.a  & b.a.b.a.a  & b.b.a.b.a.b  &   b.b.b.b.b.b\\
a.b.b.a  & b.a.b.a.b.b  &  b.b.a.b.b &  a.a.b.b 
\end{array}
\endgroup
\]
It is easy to see that $S$ passes this test suite when applying pass criterion 
$\pass_\Leftrightarrow$ and reference model $M_1'$. Moreover, the decrease of test cases   in comparison to the full reduction
test suite is significant. However, observing that the exhaustive test suite for checking requirements satisfaction
 according to Theorem~\ref{th:se} needs only 4 test cases (see Example~\ref{ex:exhaustive}), 
 shows at least for this example that complete requirements testing 
needs far more test cases than exhaustive requirements testing.
This observation will be discussed in more detail in the next section.
\xbox
\end{example}

% ==================================================================================
\section{Complexity Considerations}
\label{sec:complexity}

% ----------------------------------------------------------------------------------
\subsection{Maximal Length of Test Cases}

\paragraph{Exhaustive requirements testing.} Consider first the maximal length $\text{tcl}_{max}^{exh}$ 
of test cases for exhaustive requirements testing according to  
Theorem~\ref{th:se}. From the test suite specification in \eqref{eq:tsone} and \eqref{eq:tstwo}, we conclude that
\begin{eqnarray}
\text{tcl}_{max}^{exh} & \le & \text{`maximal length of traces in $V$'} + {} \label{eq:tclmax}\\
& & m-n+1 + {} \nonumber \\
& & \text{`maximal length of distinguishing traces $\gamma$'} \nonumber
\end{eqnarray}
When using state covers $V$ with minimal-length input traces, the latter are bounded by $n-1$, where
$n$ is the number of states in the prime machine of the reference model. Also, 
minimal-length distinguishing traces are bounded by $n - 1$. This gives us an upper bound 
\begin{equation}
\text{tcl}_{max}^{exh} \le n + m  - 1
\end{equation}
for the maximal test case length. For Example~\ref{ex:exhaustive}, the state cover has traces of maximal length
1, and the minimal-length distinguishing traces have length 1.   Moreover, $m=n=3$. Applying Formula~\eqref{eq:tclmax}
results in $\text{tcl}_{max}^{exh} = 3$, and  this is confirmed by the test traces calculated in 
Example~\ref{ex:exhaustive} that are all of length 3.

% .......................................................................................
\paragraph{Complete requirements testing.}
It is well known that testing for language equivalence requires   shorter test cases than
 testing for reduction. This is discussed, for example, in the lecture notes~\cite{PeleskaHuangLectureNotesMBT}:
 there, it is shown that for certain reference models, any complete reduction test suite needs test cases of maximal length 
 $$
 \text{tcl}_{max}^{cmp} = m\cdot n,
 $$
 where $m$ is the maximal number of SUT states, and $n$ the number of states  in the minimised observable 
 reference model~\cite[Section~4.5]{PeleskaHuangLectureNotesMBT}. This is reflected by 
 Example~\ref{ex:reduction}, where
 $m = 3$ and $n = 2$ (number of states in the prime machine of $M_1'$), 
 and the longest test case has $m\cdot n = 6$ inputs.

% ----------------------------------------------------------------------------------
\subsection{Maximal Number of Test Cases}

When calculating   upper bounds for the number of test cases needed in an exhaustive or complete test suite,  test cases that are prefixes of others can be removed from the suite: if
an input traces $\ol x_1$ reveals an error in the implementation, then this error will also
be revealed by any longer input traces $\ol x_1.\ol x_2$ which has $\ol x_1$ as prefix. The bounds presented here take this observation into account.

\paragraph{Exhaustive requirements testing.} The maximal number  $\text{tc}_{max}^{exh}$ of test cases for exhaustive requirements testing coincides with the maximal number of test cases needed for language equivalence testing, since, when choosing requirement
$R_{eq}$ specified in Theorem~\ref{th:languageeq}, this characterises language equivalence. 

As a consequence, the estimate $\text{tc}_{max}^{exh}$ for exhaustive requirements testing
according to Theorem~\ref{th:se}
is that of the H-method. As pointed out by the inventors of the H-Method in~\cite{DBLP:conf/forte/DorofeevaEY05}, the upper bound depends on the implementation technique for the method in a critical way. Based on experiments made in~\cite{DBLP:conf/forte/DorofeevaEY05} and on our implementation in the fsmlib-cpp, the H-Method usually requires significantly fewer 
test cases than the well-known W-Method, for which the upper bound 
$n^2\cdot|\Sigma_I|^{m-n+1}$  is well-known~\cite{vasilevskii1973,chow:wmethod}.
As a consequence, it is safe to assume that test suites created by the H-Method fulfil
\begin{equation}\label{eq:tcmaxexh}
\text{tc}_{max}^{exh} \le n^2\cdot|\Sigma_I|^{m-n+1}
\end{equation}

% .......................................................................................
\paragraph{Complete requirements testing.} For complete requirements testing according to Theorem~\ref{th:reqreduct} and Theorem~\ref{th:sre}, the maximal number  $\text{tc}_{max}^{cmp}$ of test cases required 
is the maximal number required for reduction testing, where the reduction test suite is generated from 
the nondeterministic abstraction $M_1'$ created from the original reference model $M$ and the requirement $R$ 
as described in Section~\ref{sec:m1prime}. 
We assume that the minimised equivalent of $M_1'$ has $n$ states.
Any complete reduction testing strategy can be used for this purpose, and from the resulting test suites $\TS$, all input traces outside $\text{pref}(\TS)\cap \ol\Pi$ can be removed, as 
shown in Theorem~\ref{th:sre}.

A very basic strategy derived from an investigation of product automata shows that  the set $\Sigma_I^{nm}$ of  all input sequences of length $m\cdot n$ is a complete reduction test suite (see, for example, the lecture notes~\cite[Section~4.5]{PeleskaHuangLectureNotesMBT}).

For most practical examples, (adaptive) state counting methods as published in~\cite{petrenko_testing_2011,DBLP:conf/hase/PetrenkoY14,hierons_testing_2004} need significantly fewer test cases that $|\Sigma_I^{nm}|$. In the general case, however, they may perform even worse than this bound derived from product automata. Fortunately, 
Theorem~\ref{th:sre} deals with deterministic implementations only, and the specific structure of the nondeterministic
reference models $M_1'$ introduced in Section~\ref{sec:m1prime} guarantees that every state is \emph{deterministically reachable} in the sense that we can calculate state covers as in the deterministic case, where every input
trace is guaranteed to lead to the specified target state -- only the outputs accompanying this input trace may vary nondeterministically. As a consequence, there exists a   deterministic state cover $V$ of $M_1'$ with $n$ elements whose traces  have bounded length
less or equal to $n-1$.
A standard argument from state counting methods (see references above) now implies that
test suites of the form
\[ 
V.\bigcup_{i=0}^{mn-n + 1} \Sigma_I^i
\]
are complete for checking language inclusion.
Observing that $|V| = n$ and prefixes of other test cases can be removed from a   test suite
without impairing its completeness properties, this results in the upper bound
\begin{equation}\label{eq:tcmaxcmp}
\text{tc}_{max}^{cmp} = n\cdot |\Sigma_I|^{mn - n + 1},
\end{equation}
for the number of test cases needed to check whether the SUT is a reduction of~$M_1'$.

% ------------------------------------------------------------------------
\subsection{Discussion of Bounds for the Number of Test Cases}
 
Comparing the dominating values in formulas \eqref{eq:tcmaxexh} and \eqref{eq:tcmaxcmp},
we find that the exhaustive strategy is only exponential in the difference $m-n$, whereas the complete strategy is exponential in the product $m\cdot(n-1)$. This confirms 
the observation from Example~\ref{ex:reduction} that complete requirements testing 
needs considerably more test cases that the exhaustive strategy. 

From a practical perspective, it will be useful in most situations to learn additional errors about violations of language equivalence, even if they do not represent violations of requirements. The disadvantage of having to debug whether a failed test case points to a requirements violation or ``only'' to a general violation of language equivalence seems of lesser importance to us than the fact that the exhaustive strategy needs fewer test cases.

% ==========================================================================================
\section{Related Work}\label{sec:related}

The use of formal specifications, in particular, reference models with formal behavioural semantics, 
has a long tradition in testing~\cite{DBLP:journals/csur/HieronsBBCDDGHKKLSVWZ09,Petrenko:2012:MTS:2347096.2347101,DBLP:journals/jss/AnandBCCCGHHMOE13}. Among the numerous formal approaches, complete test strategies have received special attention, because they guarantee full fault coverage with respect to a reference model and a conformance relation under certain well-defined hypotheses concerning the potential faults of the system under test (SUT). Complete strategies are of particular interest in the domain of safety-critical systems, where a justification of the test case selection is required in order to obtain certification credit. 
These strategies have been  comprehensively investigated in the context of  conformance testing.
Typical conformance relations were I/O-language equivalence or language containment~\cite{vasilevskii1973,chow:wmethod,DBLP:conf/forte/DorofeevaEY05,DBLP:conf/forte/DorofeevaEY05,simao_reducing_2012,hierons_testing_2004,petrenko_testing_2011,peleska_sttt_2014}, 
refinement relations for process algebras 
and related formalisms~\cite{DBLP:conf/icfem/CavalcantiG07,DBLP:journals/acta/CavalcantiG11},   
and the well-known ioco-relation~\cite{tretmans1996}. In the context of hybrid systems, new conformance 
relations have been proposed, for example, in~\cite{DBLP:journals/scp/AraujoCMMS18}.

Theorem~\ref{th:sre} states that {\it any} complete reduction test suite can be modified
to yield a (usually smaller) complete suite for requirements testing. Typical complete
reduction testing strategies are  based on a \emph{state counting} method; these are essential for
testing reduction relations  between nondeterministic reference models and (deterministic or nondeterministic) SUTs. State counting has been explained in~\cite{hierons_testing_2004,petrenko_testing_2011}. The fact that reduction (in our case, the reduction of a model abstraction $M_1'$ presented in Section~\ref{sec:m1prime}) preserves
requirements satisfaction is a very general insight which holds for different modelling formalisms
and requirements specification methods. 
We refer here the well-known fact that LTL specifications can be checked by a maximally nondeterministic 
Buchi automaton and all refinements thereof fulfil the   formula as well~\cite{DBLP:books/daglib/0020348}. The Unified Theories of Programming~\cite{hoare1998} investigate both requirements
satisfaction and refinement 
on a more general logical level and provide the insight that refinement is 
strongly related to logical implication. Therefore,   requirements satisfaction is preserved by refinement as a 
simple logical consequence.

While conformance testing is preferred in the field of protocol verification~\cite{protocoltestsystems95,DBLP:conf/pts/TretmansKB91}, other application areas follow the property-driven 
approach, where it has to be established that the SUT implements a collection of 
requirements in  a correct way~\cite{machado_towards_2007,DBLP:conf/fates/FernandezMP03,DBLP:conf/soqua/LiQ04,reqbasedtestingskokovic,DBLP:conf/isola/0001BH18}. Again, models reflecting the requirements under consideration may be used. Alternatively, implicit specifications in linear
 temporal 
logic can be constructed to identify the I/O-traces fulfilling a requirement; the underlying theory has been elaborated, for example, in \cite[Section~4.2]{DBLP:books/daglib/0020348} and \cite{Safra:1988:COA:1398513.1398627}. Since logic specifications referring to the interfaces of the SUT alone 
can become very complex, it is often advocated to combine models with temporal logic specifications, so that the latter may also refer to internal states of the model, thereby simplifying the formulae~\cite{DBLP:conf/ets/Peleska18,DBLP:conf/isola/0001BH18}.

When applied to complex real-world systems, 
complete FSM-based testing methods usually require very large test suites.
This problem    has been mitigated in recent years by abstraction techniques based on equivalence classes and symbolic state machines, see, for example, \cite{Huang2017,DBLP:conf/models/Petrenko18}: the original FSM-based test generation algorithms can be applied to abstractions of more complex models, such as extended finite state machines or UML state machines. Examples presented in~\cite{DBLP:conf/rssrail/PeleskaHH16}  show that these abstraction techniques allow for complete testing of quite complex control systems with feasible effort.

% =====================================================================================

\section{Conclusion}\label{sec:conc}

In this paper, a notion for specifying elementary and composite requirements in deterministic finite state machine models has been defined. Two   black-box 
testing strategies proving or disproving that an implementation satisfies these requirements have been presented. The first is exhaustive in the sense that every implementation
violating the requirement will fail at least one test case generated according to the strategy. Failing a test case implies that the SUT is not language equivalent to the reference model, but may not necessarily mean that the implementation violates the requirement. The second is complete in the sense that it is exhaustive and guarantees that 
failing a test case alway implies that the system under test violates the requirement.
The exhaustiveness and completeness properties, respectively,  of the test suites generated according to these strategies hold under the assumption that the implementation has no more 
than $m \ge n$ states, where $n$ is the known number of states in the reference model.

The implementation of the first strategy is based on the H-method.
Using a real-world application, it is demonstrated that the first strategy frequently requires significantly less test cases than a complete method establishing language equivalence 
by means of the original H-method. Therefore, the new method is well-suited for testing 
a selection of critical requirements with guaranteed fault coverage, while less critical requirements can be tested in the conventional way, using heuristics for test case generation.

The implementation of the second strategy is based on a state counting method which 
has been originally used to test for language inclusion. While the second strategy 
leads to smaller test suites in comparison to complete methods showing language inclusion, it usually results in more test cases than needed for establishing language equivalence. Consequently, it is of more theoretical interest, at least if a model can be constructed 
that is equivalent to the desired behaviour of the implementation.

% =====================================================================================

% For one-column wide figures use
%\begin{figure}
%% Use the relevant command to insert your figure file.
%% For example, with the graphicx package use
%  \includegraphics{example.eps}
%% figure caption is below the figure
%\caption{Please write your figure caption here}
%\label{fig:1}       % Give a unique label
%\end{figure}
%

%% For tables use
%\begin{table}
%% table caption is above the table
%\caption{Please write your table caption here}
%\label{tab:1}       % Give a unique label
%% For LaTeX tables use
%\begin{tabular}{lll}
%\hline\noalign{\smallskip}
%first & second & third  \\
%\noalign{\smallskip}\hline\noalign{\smallskip}
%number & number & number \\
%number & number & number \\
%\noalign{\smallskip}\hline
%\end{tabular}
%\end{table}

\begin{acknowledgements}
The authors would like to thank Robert Sachtleben
for pointing out important details about the complexity of   state counting methods.
\end{acknowledgements}

% BibTeX users please use one of
%\bibliographystyle{spbasic}      % basic style, author-year citations
%\bibliographystyle{spmpsci}      % mathematics and physical sciences
%\bibliographystyle{spphys}       % APS-like style for physics
%\bibliography{}   % name your BibTeX data base

%\bibliographystyle{spmpsci}
%\bibliography{references,jp}

% ======================================================================================

% =====================================================================================
\appendix
\normalsize
\section{Tool Support and Resources}
\label{sec:fsmlib}

The test suites presented in the examples above have been generated using the 
open source library
fsmlib-cpp. The library is downloaded, compiled, and some executables are created according to the instructions given in~\cite[Appendix B]{PeleskaHuangLectureNotesMBT}. After that, an executable {\tt fsm-generator} is available which allows for test suite generation according to various strategies without having to write own main programs accessing the library classes and their methods.

Reference model $M$ and SUT model $S$ from Example~\ref{ex:runninga} can be found in the
fsmlib-cpp installation, directory
\footnotesize
\begin{verbatim}
resources/complete-requirements-based-testing/Example-2-3-4
\end{verbatim}
\normalsize
as files {\tt M.csv}  and {\tt S.csv}, respectively (deterministic FSMs can be specified 
in CSV format as explained in~\cite[Appendix B]{PeleskaHuangLectureNotesMBT}).   

To re-generate the H-method test suite checking language equivalence as described in Example~\ref{ex:exhaustive}, change into
the Example-2-3-4-directory and type command
\footnotesize
\begin{verbatim}
<path-to-executable>/fsm-generator -h -a 0 M.csv
\end{verbatim} 
\normalsize
Option {\tt -a} specifies the maximal number of additional states allowed in the SUT, so this generation creates an H-test suite from $M$ which is complete for language equivalence testing under the assumption that $S$ does not have more states than $M$.

For generating the exhaustive requirements test suite from $M$ and $R$ according to
Example~\ref{ex:exhaustive}, the abstraction $M_1$ has to be manually created from $M$ and $R$
as specified in Example~\ref{ex:runninga}. This DFSM is also stored in the Example-2-3-4 directory as file {\tt M1.csv}. The exhaustive requirements test suite  is now generated by command
\footnotesize
\begin{verbatim}
<path-to-executable>/fsm-generator -s -h -a 0 M.csv M1.csv
\end{verbatim}
\normalsize
The {\tt -s} parameter specifies requirements-based testing, and in such a case, a second
FSM specification file (here: {\tt M1.csv}) is expected as parameter. This call to the generator creates exactly the test suite with the 4 test cases shown in  Example~\ref{ex:exhaustive} for requirements-based testing according to Theorem~\ref{th:se}.

For generating the test suites related to the Fasten-Seat-Belt and Return-to-Seat-Sign controller described in Section~\ref{sec:rwexample}, change into directory
\footnotesize
\begin{verbatim}
resources/complete-requirements-based-testing/Section-6
\end{verbatim}
\normalsize
The reference model described in Section~\ref{sec:rwexample} is stored in this directory
as {\tt FSBRTSX.csv} (note that the csv-file uses other state names than the $s_i$ shown in Table~\ref{tab:fsb}). The DFSM abstraction created from requirement $\mathbf{R}_1$ is
contained in file {\tt FSBRTSX-ABS-R1.csv}. To re-create the test suites for different values of $m-n$ as specified in Table~\ref{tab:hversusimplication}, column $\mathbf{R}_1$, use commands
\footnotesize
\begin{verbatim}
<path-to-executable>/fsm-generator -s -h -a 0 FSBRTSX.csv FSBRTSX-ABS-R1.csv
<path-to-executable>/fsm-generator -s -h -a 1 FSBRTSX.csv FSBRTSX-ABS-R1.csv
<path-to-executable>/fsm-generator -s -h -a 2 FSBRTSX.csv FSBRTSX-ABS-R1.csv
\end{verbatim}
\normalsize

For requirement $\mathbf{R}_2$ specified in Section~\ref{sec:rwexample}, use abstraction file
file {\tt FSBRTSX-ABS-R2.csv} and commands that are equivalent to the ones shown above. To create the complete 
test suites for language equivalence with the H-Method, use commands
\footnotesize
\begin{verbatim}
<path-to-executable>/fsm-generator -h -a 0 FSBRTSX.csv 
<path-to-executable>/fsm-generator -h -a 1 FSBRTSX.csv 
<path-to-executable>/fsm-generator -h -a 2 FSBRTSX.csv 
\end{verbatim}
\normalsize

\end{document}